\def\full{1}        
\def\shownotes{1}   
\newtheorem{theorem}{Theorem}
\newtheorem{definition}{Definition}
\newtheorem{claim}[theorem]{Claim}
\newcommand{\authnote}[2]{{ $\ll$\textsf{\footnotesize #1 notes: #2}$\gg$}}
\newcommand{\authnote}[2]{}
\newcommand{\Snote}[1]{{\authnote{Sharon}{#1}}}
\newif\ifdsexp
\providecommand{\ie}{\emph{i.e.,} }
\providecommand{\eg}{\emph{e.g.,} }
\providecommand{\etal}{\emph{et al.}\xspace}   
\providecommand{\etc}{\emph{etc.}}      
\providecommand{\myparab}[1]{\smallskip\noindent\textbf{#1} }
\title{Calibrating Data to Sensitivity in Private Data Analysis}
\author{
\alignauthor
Davide Proserpio\\
       \affaddr{Boston University}\\
       \email{dproserp@bu.edu}
\alignauthor
Sharon Goldberg\\
       \affaddr{Boston University}\\
       \email{goldbe@cs.bu.edu}
\alignauthor
Frank McSherry\\
       \affaddr{Microsoft Research}\\
       \email{mcsherry@microsoft.com}
}
\date{\today}
\begin{document}
\maketitle
\begin{abstract}

We present an approach to differentially private computation in which one does not scale up the magnitude of noise for challenging queries, but rather scales {\em down} the contributions of challenging records. While scaling down all records uniformly is equivalent to scaling up the noise magnitude, we show that scaling records {\em non-uniformly} can result in substantially higher accuracy by bypassing the worst-case requirements of differential privacy for the noise magnitudes.

This paper details the data analysis platform \textbf{wPINQ}, which generalizes the Privacy Integrated Query (PINQ) to weighted datasets.
Using a few simple operators (including a non-uniformly scaling Join operator) wPINQ can reproduce (and improve) several recent results on graph analysis and introduce new generalizations (\eg counting triangles with given degrees). We also show how to integrate probabilistic inference techniques to synthesize datasets respecting more complicated (and less easily interpreted) measurements.

\end{abstract}



\section{Introduction}\label{sec:intro}

Differential Privacy (DP) has emerged as a standard for privacy-preserving data analysis.  A number of platforms propose to lower the barrier to entry for analysts interested in differential privacy by presenting languages that guarantee that all written statements satisfy differential privacy~\cite{PINQ,airavat,fire,DFuzz,gupt}.  However, these platforms have limited applicability to a broad set of analyses, in particular to the analysis of graphs, because they rely on DP's {worst-case sensitivity} bounds over multisets.

In this paper we present a platform for differentially private data analysis, wPINQ (for ``weighted" PINQ), which uses weighted datasets to bypass many  difficulties encountered when working with worst-case sensitivity.
wPINQ follows the language-based approach of PINQ~\cite{PINQ}, offering a SQL-like declarative analysis language, but extends it to a broader class of datasets with more flexible operators, making it capable of graph analyses that PINQ, and other differential privacy platforms, are unable to perform.
wPINQ also exploits a connection between DP and incremental computation to provide a random-walk-based probabilistic inference engine, capable of fitting synthetic datasets to arbitrary wPINQ measurements.
Compared to PINQ and other platforms, wPINQ is able to express and automatically confirm the privacy guarantees of a richer class of analyses, notably graph analyses, and automatically invoke inference techniques to synthesize representative datasets.

In an earlier workshop paper~\cite{wosn}, we sketched a proposed workflow for DP graph analysis and presented preliminary results showing how these techniques could be used to synthesize graphs that respect degree and joint-degree distributions.  This paper is a full treatment of our data analysis platform, including the wPINQ programming language and its formal properties, as well as its incremental query processing engine for probabilistic inference. We discuss why wPINQ is well-suited to solve problems in social graph analysis, and present new and more sophisticated use cases related to computing the distribution of triangles and motifs in social graphs, results of independent interest.

\subsection{Reducing sensitivity with weighted data}

In the worst-case sensitivity framework, DP techniques protect privacy by giving noisy answers to queries against a dataset; the amplitude of the noise is determined by the sensitivity of the query. In a sensitive query, a single change in just one record of the input dataset can cause a large change in the query's answer, so substantial noise must be added to the answer to mask the presence or absence of that record. Because worst-case sensitivity is computed as the maximum change in the query's answer over \emph{all possible input datasets}, this can result in significant noise, even if the input dataset is {not} worst-case.
This problem is particularly pronounced in analysis of social graphs, where the presence or absence of a single edge can drastically change the result of a query~\cite{hayJoins,smooth}. Consider the problem of counting triangles in the two graphs of Figure~\ref{fig:triangles-intro}. The left graph
has no triangles, but the addition of just one edge $(1,2)$ immediately creates $|V|-2$ triangles; to achieve differential privacy, the magnitude of the noise added to the query output is proportional to $|V|-2$. This same amount of noise must be added to the output of the query on the right graph, even though it is not a ``worst case'' graph like the first graph.

We will bypass the need to add noise proportional to worst-case sensitivity by working with \emph{weighted datasets}.  Weighted datasets are a generalization of multisets, where records can appear an integer number of times, to sets in which records may appear with a real-valued multiplicity. Weighted datasets allows us to smoothly suppress individual `troublesome' records (\ie records that necessitate extra noise to preserve privacy) by scaling down their influence in the output. We scale down the weight of these individual troublesome output records by the minimal amount they \emph{do} require, rather than scale up the amplitude of the noise applied to all records by the maximal amount they \emph{may} require. This \emph{data-dependent rescaling} introduces inaccuracy only when and where the data call for it, bypassing many worst-case sensitivity bounds, especially for graphs.

\begin{figure}[t]
\begin{center}
\includegraphics[scale=0.3]{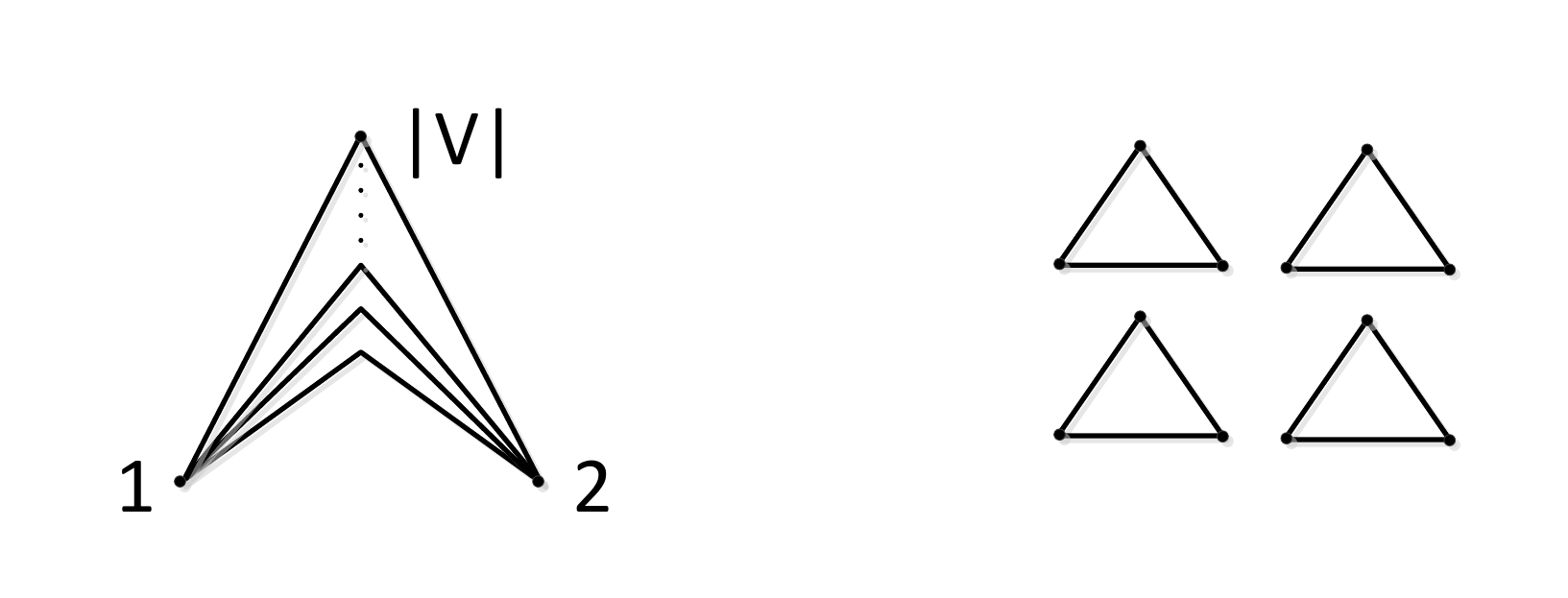}
\vspace{-5mm}
\caption{(left) Worst- and (right) best-case graphs for the problem of privately counting triangles.}\label{fig:triangles-intro}
\vspace{-8mm}
\end{center}
\end{figure}

Returning to the example in Figure~\ref{fig:triangles-intro}, if the weight of each output triangle $(a,b,c)$ is set to be $1/\max \{ d_a, d_b, d_c \}$ where $d_a$ is the degree of node $a$, the presence or absence of any single input edge can alter only a constant total amount of weight across all output triangles. (This is because any edge $(a,b)$ can create at most $\max\{d_a,d_b\}$ triangles.) 
 %
 %
Adding the weights of these triangles (the weighted analog of ``counting") with constant-magnitude noise provides differential privacy, and can result in a more accurate measurement than counting the triangles with noise proportional to $|V|$ (which is equivalent to weighting all triangles with weight $1/|V|$).
%
%
While this approach provides no improvement for the left graph, it can significantly improve accuracy for the right graph: since this graph has constant degree, triangles are measured with only constant noise.

It is worth noting how this approach differs from smooth sensitivity~\cite{smooth}, which adds noise based on instance-dependent sensitivity; if the left and right graphs of Figure~\ref{fig:triangles-intro} were unioned into one graph, smooth sensitivity would still insist on a large amount of noise, whereas weighted datasets would allow the left half to be suppressed while the right half is not. This approach also differs from general linear queries allowing non-uniform weights~\cite{matrixAlg2}; these approaches apply only to linear queries (triangles is not) and require the non-uniformity to be explicitly specified in the query, rather than determined in a data-dependent manner. Weighted datasets are likely complementary to both of these other approaches.


\subsection{Our platform: \lowercase{w}PINQ + MCMC}

In Section~\ref{sec:wPINQ}, we describe the design of wPINQ, a declarative programming language over weighted datasets, which generalizes PINQ~\cite{PINQ}. While PINQ provided multiset transformations such as \texttt{Select}, \texttt{Where}, and \texttt{GroupBy}, the limitations of multisets meant that it lacked an effective implementation of the \texttt{Join} transformation, among others.  wPINQ extends these to the case of weighted datasets, and uses data-dependent rescaling of record weights to enable a new, useful, \texttt{Join} operator, as well as several other useful transformations (\eg \texttt{Concat}, \texttt{SelectMany}, \texttt{Union}, \etc).


In Section~\ref{sec:graphs} we show how wPINQ operators can implement a number of new graph analyses, focusing on subgraph-counting queries that informed the graph generator in~\cite{sigcomm06}.  We use a few lines of wPINQ to produce new algorithms to count triangles incident on vertices of degrees $(d_1, d_2, d_3)$ each with noise proportional to $O(d_1^2 + d_2^2 + d_3^2)$, as well as length-4 cycles incident on vertices of degrees $(d_1, d_2, d_3, d_4)$.




In Section~\ref{sec:pi} we show how wPINQ measurements can be analyzed with Markov Chain Monte Carlo (MCMC) techniques, to sample a synthetic dataset from the posterior distributions over datasets given wPINQ's noisy observations. This post-processing serves three purposes:
\begin{enumerate}[itemsep=0pt]
\item It can improve the accuracy of individual answers to wPINQ queries by removing obvious inconsistencies introduced by noise (\eg when the noisy count of triangles is a negative number, or not a multiple of six).
\item It can improve the accuracy of multiple measurements by combining the constraints they impose on each other (\eg the degree distribution and joint-degree distribution constrain each other; fitting a synthetic graph to both measurements produces more accurate results).
\item It can provide useful estimates for quantities that have not been directly queried in wPINQ, by evaluating them on the samples from the posterior distribution (\eg the joint-degree distribution constrains a graph's assortativity, \ie the extent to which nodes connect to other nodes with similar degrees, and the assortativity on the sampled graphs should be relatively accurate).
\end{enumerate}
We detail the MCMC process, and the design and implementation of our efficient incremental re-execution platform that speeds up the iterative query re-evaluation at the heart of the MCMC process.


Finally, as a case study of the utility of our platform, Section~\ref{sec:synth} discusses the application of our platform to the problem counting triangles in a graph and evaluates its performance on several datasets.  


\section{Weighted Datasets and \lowercase{w}PINQ}\label{sec:wPINQ}

In order to design differentially-private algorithms that surmount worst-case sensitivity bounds by scaling down the influence of troublesome records, it will be convenient for us to work with \emph{weighted datasets}. Section~\ref{sec:background} discusses differential privacy (DP) for weighted datasets. The remainder of this section presents our design of \textbf{Weighted PINQ (wPINQ)}, a declarative programming language for weighted datasets that guarantees DP for all queries written in the language.
The structure of wPINQ is very similar to its predecessor PINQ~\cite{PINQ}; both languages apply a sequence of \emph{stable transformations} to a dataset (Section~\ref{sec:trans}), and then release results after a \emph{differentially-private aggregation} (Section~\ref{sec:agg}) is performed and the appropriate privacy costs are accumulated.
Our main contribution in the design of wPINQ are new stable transformations operators (Figure~\ref{fig:transformations}) that leverage the flexibility of weighted datasets to rescale individual record weights in a data-dependent manner.  We discuss these transformations in Sections~\ref{sec:select}--\ref{sec:shave}.




\subsection{Weighted Datasets \& Differential Privacy}\label{sec:background}

We can think of a traditional dataset (\ie a multiset) as function $A : D \rightarrow \mathbb{N}$ where $A(x)$ is non-negative integer representing the number of times record $x$ appears in the dataset $A$.  A weighted dataset extends the range of $A$ to the real numbers, and corresponds to a function $A : D \rightarrow \mathbb{R}$ where $A(x)$ is the real-valued weight of record $x$.
%
%
We define the difference between weighted datasets $A$ and $B$ as the sum of their element-wise differences:
$$ \| A - B \| = \sum_x |A(x) - B(x)| \; .$$
We write $\| A \| = \sum_x |A(x)|$ for the size of a dataset.

In subsequent examples, we write datasets as sets of weighted records, where each is a pair of record and weight, omitting records with zero weight. To avoid confusion, we use real numbers with decimal points to represent weights. We use these two datasets in all our examples:
\begin{eqnarray*}
A & = & \{ (``1", 0.75), (``2", 2.0), (``3", 1.0) \}  \\
B & = & \{ (``1", 3.0), (``4", 2.0) \} \; .
\end{eqnarray*}
Here we have $A(``2") = 2.0$ and $B(``0") = 0.0$.

Differential privacy (DP)~\cite{DMNS06} generalizes to weighted datasets:
\begin{definition}
A randomized computation $M$ provides $\epsilon$-{\em differential privacy} if for any weighted datasets $A$ and $B$, and any set of possible outputs $S \subseteq \textrm{Range}(M)$,
\begin{eqnarray*}
\Pr_M[M(A) \in S] & \le & \Pr_M[M(B) \in S] \times \exp(\epsilon \times \|A - B\|)\; .
\end{eqnarray*}
\end{definition}

\noindent
This definition is equivalent to the standard definition of differential privacy on datasets with non-negative integer weights, for which $\|A - B\|$ is equal to the symmetric difference between multisets $A$ and $B$. It imposes additional constraints on datasets with non-integer weights. As in the standard definition, $\epsilon$ measures the \emph{privacy cost} of the computation $M$ (a smaller $\epsilon$ implies better privacy).

This definition satisfies sequential composition: A sequence of computations $M_i$ each providing $\epsilon_i$-DP is itself $\sum_i \epsilon_i$-DP. (This follows by rewriting the proofs of Theorem 3 in \cite{PINQ} to use $\| \cdot \|$ rather than symmetric difference.)  Like PINQ, wPINQ uses this property to track the cumulative privacy cost of a sequence of queries and ensure that they remain below a privacy budget before performing a measurement.

\myparab{Privacy guarantees for graphs.} In all wPINQ algorithms for graphs presented here, the input sensitive dataset is \texttt{edges} is a collection  of edges $(a,b)$ each with weight $1.0$, \ie it is equivalent to a traditional dataset. wPINQ will treat \texttt{edges} as a weighted dataset, and perform $\epsilon$-DP computations on it (which may involve manipulation of records and their weights).  This approach provides the standard notion of ``{edge} differential privacy'' where DP masks the presence or absence of individual edges (as in  \eg~\cite{hay,sala,KRSYgraphs,smooth}). Importantly, even though we provide a standard differential privacy guarantee, our use of weighted datasets allow us to exploit a richer set of transformations and DP algorithms.


Edge differential privacy does not directly provide privacy guarantees for vertices, for example ``{vertex} differential privacy"~\cite{nodeprivacy,chopping,recursive}, in which the presence or absence of entire vertices are masked. While non-uniformly weighting edges in the input may be a good first step towards vertex differential privacy, determining an appropriate weighting for the \texttt{edges} input is non-trivial and we will not investigate it here.


\subsection{Differentially private aggregation}\label{sec:agg}

One of the most common DP mechanisms is the ``noisy histogram'', where disjoint subsets of the dataset are counted (forming a histogram) and independent values drawn from a Laplace distribution (``noise'') are added to each count~\cite{DMNS06}. wPINQ supports this aggregation with the \texttt{NoisyCount}$(A, \epsilon)$ operator, which adds random noise from the $\textrm{Laplace}(1/\epsilon)$ distribution (of mean zero and variance $2/\epsilon^2$) to the weight of {each record} $x$ in the domain of $A$:
\begin{eqnarray*}
\texttt{NoisyCount}(A, \epsilon)(x) & = & A(x) + \textrm{Laplace}(1/\epsilon) \; .
\end{eqnarray*}
\texttt{NoisyCount}$(A,\epsilon)$ provides $\epsilon$-differential privacy, where the proof follows from the proof of \cite{DMNS06}, substituting $\| \cdot \|$ for symmetric difference. Importantly, we do \emph{not} scale up the magnitude of the noise as a function of query sensitivity, as we will instead scale down the weights contributed by records to achieve the same goal.

To preserve differential privacy, \texttt{NoisyCount} must return a noisy value for every record $x$ in the domain of $A$, even if $x$ is not present in the dataset, \ie $A(x)=0$.  With weighted datasets, the domain of $A$ can be arbitrary large. Thus, wPINQ implements \texttt{NoisyCount} with a dictionary mapping only those records with non-zero weight to a noisy count. If \texttt{NoisyCount} is asked for a record $x$ with $A(x)=0$, it returns fresh independent Laplace noise, which is then recorded and reproduced for later queries for the same record $x$.

\myparab{Example.} If we apply \texttt{NoisyCount} to the sample dataset $A$ with noise parameter $\epsilon$, the results for ``0'', ``1'', and ``2''  would be distributed as
\begin{eqnarray*}
\texttt{NoisyCount}(A, \epsilon)(``0") &\sim& 0.00 + \textrm{Laplace}(1/\epsilon) \; ,\\
\texttt{NoisyCount}(A, \epsilon)(``1") &\sim& 0.75 + \textrm{Laplace}(1/\epsilon) \; ,\\
\texttt{NoisyCount}(A, \epsilon)(``2") &\sim& 2.00 + \textrm{Laplace}(1/\epsilon) \; .
\end{eqnarray*}
The result for ``0'' would only be determined (and then recorded) when the value of ``0'' is requested by the user.

\ifnum\full=1
\medskip
Although this paper only requires \texttt{NoisyCount}, several other aggregations generalize easily to weighted datasets, including noisy versions of sum and average. The exponential mechanism~\cite{MT} also generalizes, to scoring functions of that are 1-Lipschitz with respect to an input weighted dataset.
\fi


\subsection{Stable transformations}\label{sec:trans}
\begin{figure}
\begin{eqnarray*}
\texttt{Select} &:& \textrm{per-record transformation}\\
\texttt{Where} &:& \textrm{per-record filtering}\\
\texttt{SelectMany} &:& \textrm{per-record one-to-many transformation} \\
\texttt{GroupBy} &:& \textrm{groups inputs by key} \\
\texttt{Shave} &:& \textrm{breaks one weighted record into several} \\
\texttt{Join} &:& \textrm{matches pairs of inputs by key} \\
\texttt{Union} &:& \textrm{per-record maximum of weights} \\
\texttt{Intersect} &:& \textrm{per-record minimum of weights}
\end{eqnarray*}
\vspace{-8mm}
\caption{Several stable transformations in wPINQ. \label{fig:transformations}}
\vspace{-8mm}
\end{figure}

wPINQ rarely uses the \texttt{NoisyCount} directly on a input weighted dataset, but rather on the output of a \emph{stable transformation} of one weighted dataset to another, defined as:
\begin{definition}\label{def:stable}
A transformation $T : {\mathbb R}^D \rightarrow {\mathbb R}^R$ is \emph{stable} if for any two datasets $A$ and $A'$
\begin{eqnarray*}\nonumber
\|T(A) - T(A')\| & \le & \|A - A'\| \; .
\end{eqnarray*}
A binary transformation $T : ({\mathbb R}^{D_1} \times {\mathbb R}^{D_2}) \rightarrow {\mathbb R}^R$ is \emph{stable} if for any datasets $A, A'$ and $B, B'$
\begin{eqnarray*}\nonumber
\|T(A, B) - T(A', B')\| & \le & \|A - A'\| + \|B - B'\| \;
\end{eqnarray*}
\end{definition}

\noindent
(This definition generalizes Definition 2 in~\cite{PINQ}, again with $\| \cdot \|$ in place of symmetric difference.)  The composition $T_1(T_2(\cdot))$ of stable transformations $T_1,T_2$ is also stable.  Stable transformation are useful because they can composed with DP aggregations without compromising privacy, as shown by the following theorem (generalized from \cite{PINQ}):

\begin{theorem}\label{thm:compose}
If $T$ is stable \emph{unary} transformation and $M$ is an $\epsilon$-differentially private aggregation, then $M(T(\cdot))$ is also $\epsilon$-differentially private.
\end{theorem}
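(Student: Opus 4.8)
The plan is to derive the privacy guarantee for $M(T(\cdot))$ directly from the two hypotheses by a substitution-plus-monotonicity argument, with no induction or case analysis needed. To begin, I would fix arbitrary weighted datasets $A$ and $A'$ and an arbitrary output set $S \subseteq \textrm{Range}(M)$; the goal is then exactly to verify the defining inequality of Definition~1 for the composite computation $M(T(\cdot))$ with distance $\|A - A'\|$.

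The first step is to apply the $\epsilon$-differential privacy of $M$, but instantiated at the \emph{transformed} inputs $T(A)$ and $T(A')$ rather than at $A$ and $A'$. This is legitimate precisely because $M$'s guarantee is assumed to hold for every pair of input datasets, and $T(A), T(A')$ are just two such datasets. This yields
\[ \Pr_M[M(T(A)) \in S] \;\le\; \Pr_M[M(T(A')) \in S] \times \exp(\epsilon \times \|T(A) - T(A')\|). \]
The second step is to control the exponent using the stability of $T$. By Definition~\ref{def:stable} we have $\|T(A) - T(A')\| \le \|A - A'\|$, and since $\epsilon \ge 0$ and $\exp$ is monotone increasing, it follows that $\exp(\epsilon \times \|T(A) - T(A')\|) \le \exp(\epsilon \times \|A - A'\|)$. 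Chaining this with the previous display gives
\[ \Pr_M[M(T(A)) \in S] \;\le\; \Pr_M[M(T(A')) \in S] \times \exp(\epsilon \times \|A - A'\|), \]
which is exactly the $\epsilon$-DP condition for $M(T(\cdot))$. Since $A$, $A'$, and $S$ were arbitrary, this completes the proof.

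I do not expect any genuine obstacle here; the argument is a one-line substitution followed by a monotonicity estimate. The only points demanding mild care are bookkeeping ones: invoking $M$'s guarantee at $T(A), T(A')$ (correct because the guarantee is universally quantified over inputs), and applying monotonicity of $\exp$ with the correct sign of $\epsilon$, so that replacing $\|T(A)-T(A')\|$ by the larger quantity $\|A-A'\|$ only \emph{weakens} the multiplicative bound rather than violating it. The binary case is not needed for this statement, but would follow identically by substituting the binary stability inequality $\|T(A,B) - T(A',B')\| \le \|A-A'\| + \|B-B'\|$ in place of the unary one.
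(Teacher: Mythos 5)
Your proof is correct and is exactly the argument the paper relies on: the paper defers to the PINQ composition proof "with $\| \cdot \|$ in place of symmetric difference," which is precisely your substitution-plus-monotonicity argument, made possible because the paper's Definition~1 quantifies the DP guarantee over \emph{all} pairs of datasets (so it can be invoked directly at $T(A), T(A')$). No gaps; the two key bookkeeping points you flag are the right ones.
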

Importantly, transformations themselves do not provide differential privacy; rather, the output of a sequence of transformations is only released by wPINQ after a differentially-private aggregation (\texttt{NoisyCount}), and the appropriate privacy cost is debited from the dataset's privacy budget.

Stability for binary transformations (\eg \texttt{Join}) is more subtle, in that a differentially private aggregation of the output reveals information about both inputs. If a dataset $A$ is used multiple times in a query (\eg as both inputs to a self-join), the aggregation reveals information about the dataset multiple times. Specifically, if dataset $A$ is used $k$ times in a query with an $\epsilon$-differentially-private aggregation, the result is $k\epsilon$-differentially private for $A$. The number of times a dataset is used in a query can be seen statically from the query plan, and wPINQ can use similar techniques as in PINQ to accumulate the appropriate multiple of $\epsilon$ for each input dataset.



The rest of this section presents wPINQ's stable transformations, and discuss how they rescale record weights to achieve stability. We start with several transformations whose behavior is similar to their implementations in PINQ (and indeed, LINQ): \texttt{Select}, \texttt{Where}, \texttt{GroupBy}, \texttt{Union}, \texttt{Intersect}, \texttt{Concat}, and \texttt{Except}. We then discuss \texttt{Join}, whose implementation as a stable transformation 
is a significant departure from the standard relational operator. We also discuss \texttt{Shave}, a new operator that decomposes one record with large weight into many records with smaller weights.

\subsection{Select, {Where}, and SelectMany}\label{sec:select}

We start with two of the most fundamental database operators: \texttt{Select} and \texttt{Where}. \texttt{Select} applies a function $f : D \rightarrow R$ to each input record:
\begin{eqnarray*}
\texttt{Select}(A, f)(x) & = & \sum_{y : f(y) = x} A(y) \; .
\end{eqnarray*}
Importantly, this  produces output records weighted by the \emph{accumulated weight} of all input records that mapped to them. 
\texttt{Where} applies a predicate $p : D \rightarrow \{ 0, 1\}$ to each record and yields those records satisfying the predicate.
\begin{eqnarray*}
\texttt{Where}(A,p)(x) & = & p(x) \times A(x) \; .
\end{eqnarray*}
One can verify that both \texttt{Select} and \texttt{Where} are stable.

\myparab{Example. } Applying \texttt{Where} with predicate $x^2 < 5$ to our sample dataset $A$ in Section~\ref{sec:background} gives $\{ (``1", 0.75), (``2", 2.0) \}$. Applying the \texttt{Select} transformation with $f(x)=x \mod 2$ to $A$, we obtain the dataset $\{ (``0", 2.0), (``1", 1.75) \}$; this follows because the ``1'' and ``3'' records in $A$ are reduced to the same output record (``1'') and so their weights accumulate.

\smallskip
\ifnum\full=0
We also mention the \texttt{SelectMany} operator, that generalizes \texttt{Select} and \texttt{Where}.  (Its full description is in our technical report.)  \texttt{SelectMany} is a unary operator, adapted from LINQ, which allows one-to-many record transformation: \texttt{SelectMany} maps each record to a list of elements, and then outputs the (flattened) collection of all elements from all the produced lists.  To provide a stable \texttt{SelectMany} operator, wPINQ scales down the weight of output records by the number of records produced by the same input record; \eg an input mapped to $n$ items would be transformed into $n$ records with weights scaled down by $n$. Section~\ref{sec:shave} uses \texttt{SelectMany} to transform \texttt{edges} into a dataset of nodes.

\else
Both \texttt{Select} and \texttt{Where} are special cases of the \texttt{SelectMany} operator. The standard \texttt{SelectMany} operator in LINQ maps each record to a list, and outputs the flattened collection of all records from all produced lists. This transformation is not stable; the presence or absence of a single input record results in the presence or absence of as many records as the input record would produce. To provide a stable \texttt{SelectMany} operator, wPINQ scales down the weight of output records by the number of records produced by the same input record; an input mapped to $n$ items would be transformed into $n$ records with weights scaled down by $n$.

Specifically, \texttt{SelectMany} takes a function $f : D \rightarrow {\mathbb R}^R$ mapping each record $x$ to a weighted dataset, which is slightly more general than a list. For stability, we scale each weighted dataset to have at most unit weight and then scale it by the weight of the input record, $A(x)$:
 $$
\texttt{SelectMany}(A, f) = \sum_{x \in D} \left(A(x) \times \frac{f(x)}{\max(1, \|f(x)\|)}\right) \; .
$$

\noindent
The important feature of \texttt{SelectMany} is that different input records may produce different numbers of output records, where the scaling depends only on the number of records produced, rather than a worst-case bound on the number. This flexibility is useful in many cases. One example is in frequent itemset mining: a basket of goods is transformed by \texttt{SelectMany} into as many subsets of each size $k$ as appropriate, where the number of subsets may vary based on the number of goods in the basket.

\Snote{To Frank. I cut this sentence, I don't think it is adding much: ``While this number could be arbitrarily large, it is often not.'' }

\myparab{Example. } If we apply \texttt{SelectMany} to our sample dataset $A$, with function $f(x)=\{1, 2, \ldots, x\}$ where each element in $f(x)$ has weight $1.0$, we get
$$\{ (``1", 0.75 + 1.0 + 0.\overline{3}), (``2", 1.0 + 0.\overline{3}), (``3", 0.\overline{3}) \} \;.$$
%
%

\fi


\subsection{GroupBy}\label{sec:groupby}


The \texttt{GroupBy} operator implements functionality similar to MapReduce: it takes a key selector function (``mapper'') and a result selector function (``reducer"), and transforms a dataset first into a collection of groups, determined by the key selector, and then applies the result selector to each group, finally outputting pairs of key and result. This transformation is used to process groups of records, for example, collecting edges by their source vertex, so that the out-degree of the vertex can be determined. The traditional implementation of \texttt{GroupBy} (\eg in LINQ) is not stable, even on inputs with integral weights, because the presence or absence of a record in the input can cause one group in the output be replaced by a different group, corresponding to an addition and subtraction of an output record. This can be addressed by setting the output weight of each group to be \emph{half} the weight of the input records. This sufficient for most of the wPINQ examples in this paper, which only group unit-weight records.

\ifnum\full=0
We defer the more complex description of \texttt{GroupBy}'s behavior on general weighted datasets to our technical report.
\else

The general case of \texttt{GroupBy} on arbitrarily weighted inputs records is more complicated, but tractable.
%
%
Let the key selector function partition $A$ into multiple disjoint parts, so that $A = \sum_k A_k$. The result of wPINQ's \texttt{GroupBy} on $A$ will be the sum of its applications to each part of $A_k$:
$$ \texttt{GroupBy}(A, f) = \sum_k \texttt{GroupBy}(A_k) $$
%
%
%
For each part $A_k$, let $x_0, x_1, \ldots, x_i$ order records so that they are non-increasing in weight $A_k(x_i)$. For each $i$, \texttt{GroupBy} emits the set of the first $i$ elements in this order, with weight $(A_k(x_i) - A_k(x_{i+1}))/2$. That is,
$$ \texttt{GroupBy}(A_k)(\{ x_j : j \le i \}) = (A_k(x_i) - A_k(x_{i+1}))/2. $$
If all records have the same weight $w$, only the set of all elements has non-zero weight and it is equals to $w/2$.

As an example of \texttt{GroupBy}, consider grouping
\begin{eqnarray*}
C & = & \{ (``1", 0.75), (``2", 2.0), (``3", 1.0), (``4", 2.0), (``5", 2.0)\} \; .
\end{eqnarray*}
again using parity as a key selector. We produce four groups:
\begin{align*}
\{\;(\text{``odd, $\{5,3,1\}$''}, \; 0.375), \;
(\text{``odd, $\{5,3\}$''}, \; 0.125)\\
(\text{``odd, $\{5\}$ ''}, \; 0.5), \;
(\text{``even, $\{2,4\}$''}, \; 1.0)\;\}
\end{align*}

\noindent
Appendix~\ref{apx:ops} provides a proof of \texttt{GroupBy}'s stability.
\fi

\myparab{Node degree.} \texttt{GroupBy} can be used to obtain node degree:

{\small
\begin{verbatim}
//from (a,b) compute (a, da)
var degrees = edges.GroupBy(e => e.src, l => l.Count());
\end{verbatim}
}

\noindent
\texttt{GroupBy} groups edges by their source nodes, and then counts the number of edges in each group --- this count is exactly equal to the degree $d_a$ of the source node $a$ --- and outputs the record $\langle a,d_a \rangle$; since all input records have unit weight, the weight of each  output record is 0.5.

\subsection{Union, Intersect, Concat, and Except}\label{sec:concat}

wPINQ provides transformations that provide similar semantics to the familiar Union, Intersect, Concat, and Except database operators. \texttt{Union} and \texttt{Intersect} have weighted interpretations as element-wise min and max:
\begin{eqnarray*}
\texttt{Union}(A,B)(x) & = & \max(A(x), B(x))\\
\texttt{Intersect}(A,B)(x) & = & \min(A(x), B(x))
\end{eqnarray*}
\texttt{Concat} and \texttt{Except} can be viewed as element-wise addition and subtraction:
\begin{eqnarray*}
\texttt{Concat}(A,B)(x) & = & A(x) + B(x)\\
\texttt{Except}(A,B)(x) & = & A(x) - B(x)
\end{eqnarray*}
Each of these four are stable binary transformations.

\myparab{Example.}
Applying \texttt{Concat}$(A,B)$ with our sample datasets $A$ and $B$ we get
\begin{eqnarray*}
 \{ (``1", 3.75), (``2", 2.0),(``3", 1.0),(``4", 2.0) \} \; .
\end{eqnarray*}
and taking \texttt{Intersect}$(A,B)$ we get $\{ (``1", 0.75)\}$.

\subsection{The {Join} transformation.}\label{sec:join:wPINQ}

The \texttt{Join} operator is an excellent case study in the value of using weighted datasets, and the workhorse of our graph analysis queries.

Before describing wPINQ's \texttt{Join}, we discuss why the familiar SQL join operator fails to provide stability. The standard SQL relational equi-join takes two input datasets and a key selection function for each, and produces all pairs whose keys match. The transformation is not stable, because a single record in $A$ or $B$ could match as many as $\|B\|$ or $\|A\|$ records, and its presence or absence would cause the transformation's output to change by as many records~\cite{hayJoins}. To deal with this, the \texttt{Join} in PINQ suppresses matches that are not unique matches, damaging the output to gain stability, and providing little utility for graph analysis.

The wPINQ \texttt{Join} operator takes two weighted datasets, two key selection functions, and a reduction function to apply to each pair of records with equal keys. To determine the output of \texttt{Join} on two weighted datasets $A,B$, let $A_k$ and $B_k$ be their restrictions to those records mapping to a key $k$ under their key selection functions. The weight of each record output from a match under key $k$ will be scaled down by a factor of $\|A_k\| + \|B_k\|$. For the identity reduction function we can write this as
\begin{equation}\label{eq:joinweight}
\texttt{Join}(A, B) = \sum_k \frac{A_k\times B_k^T}{\|A_k\| + \|B_k\|}
\end{equation}
where the outer product $A_k \times B_k^T$ is the weighted collection of elements $(a,b)$ each with weight $A_k(a) \times B_k(b)$. The proof that this \texttt{Join} operator is stable appears in the Appendix~\ref{apx:ops}.

\myparab{Example.  } Consider applying \texttt{Join} to our example datasets $A$ and $B$ using ``parity" as the join key. Records with even and odd parity are
\begin{eqnarray*}
A_0 = \{ (``2", 2.0) \} & \textrm{ and } & A_1 = \{ (``1", 0.5), (``3", 1.0) \}\\
B_0 = \{ (``4", 2.0) \} & \textrm{ and } & B_1 = \{ (``1", 3.0) \}
\end{eqnarray*}
The norms are $\|A_0\| + \|B_0\| = 4.0$ and $\|A_1\| + \|B_1\| = 4.5$,
and scaling the outer products by these sums gives
\begin{eqnarray*}
A_0 \times B_0^T \; / \; 4.0 & = & \{ (``\langle2,4\rangle", 1.0) \} \\
A_1 \times B_1^T \; / \; 4.5 & = & \{ (``\langle1,1\rangle", 0.\overline{3}), (``\langle3,1\rangle", 0.\overline{6}) \}
\end{eqnarray*}
The final result is the accumulation of these two sets,
$$
\{ (``\langle2,4\rangle", 1.0), (``\langle1,1\rangle", 0.\overline{3}), (``\langle 3,1\rangle", 0.\overline{6}) \} \; .
$$

\myparab{Join and paths. } Properties of paths in a graph are an essential part of many graph analyses (Section~\ref{sec:graphs}). We use \texttt{Join} to compute the set of all paths of length-two in a graph, starting from the \texttt{edges} dataset of Section~\ref{sec:background}:

{\small
\begin{verbatim}
  //given edges (a,b) and (b,c) form paths (a,b,c)
  var paths = edges.Join(edges, x => x.dst, y => y.src,
                         (x,y) => new Path(x,y))
\end{verbatim}
}

\noindent
We join {\texttt{edges}} with itself, matching edges $(a,b)$ and $(b,c)$ using the key selectors corresponding to ``destination" and ``source" respectively.  Per (\ref{eq:joinweight}), the result is a collection of paths $(a,b,c)$ through the graph, each with weight $\frac1{2d_b}$, where $d_b$ is the degree of node $b$. Paths through high-degree nodes have smaller weight; this makes sense because each edge incident on such a vertex participates in many length two paths. At the same time, paths through low degree nodes maintain a non-trivial weight, so they can be more accurately represented in the output without harming their privacy. By diminishing the weight of each path proportionately, the influence of any one edge in the input is equally masked by a constant amount of noise in aggregation.

\subsection{Shave}\label{sec:shave}


Finally, we introduce an transformation that decomposes  records with large weight to be into multiple records with smaller weight. Such an operator is important in many analyses with non-uniform scaling of weight, to ensure that each output record has a common scale.


The \texttt{Shave}  transformation allows us to break up a record $x$ of weight $A(x)$ into multiple records $\langle x, i \rangle$ of smaller weights $w_i$ that sum to $A(x)$.   Specifically,  \texttt{Shave} takes in a function from records to a sequence of real values $f(x) = \langle w_0, w_1, \ldots \rangle$. For each record $x$ in the input $A$, \texttt{Shave} produces records $\langle x,0\rangle$, $\langle x,1\rangle$, $\ldots$ for as many terms as $\sum_i w_i \le A(x)$.  The weight of output record $\langle x,i\rangle$ is therefore
$$ \texttt{Shave}(A, f)(\langle x,i\rangle) = \max(0, \min(f(x)_i, A(x) - \sum_{j < i} f(x)_j)) \; . $$

\myparab{Example.}
Applying \texttt{Shave} to our sample dataset $A$ where we let $f(x)=\langle 1.0, 1.0, 1.0, ... \rangle$  $\forall x$, we obtain the dataset
$$\{ (``\langle 1,0\rangle ", 0.75), (``\langle 2,0\rangle ", 1.0), (``\langle 2,1 \rangle ", 1.0),
(``\langle 3,0\rangle ", 1.0), \} \;.$$
\texttt{Select} is \texttt{Shave}'s functional inverse; applying \texttt{Select} with $f(\langle x, i \rangle)=x$ that retains only the first element of the tuple, recovers the original dataset $A$ with no reduction in weight.

\myparab{Transforming edges to nodes. }  We now show how to use \texttt{SelectMany}, \texttt{Shave}, and \texttt{Where} to transform the {\texttt{edges}} dataset, where each record is a unit-weight edge, to a {\texttt{nodes}} dataset where each record is a node of weight 0.5.

{\small
\begin{verbatim}
var nodes = graph.SelectMany(e => new int[] { e.a, e.b })
                 .Shave(0.5)
                 .Where((i,x) => i == 0)
                 .Select((i,x) => x);
\end{verbatim}
}

\noindent
\texttt{SelectMany} transforms the dataset of edge records into a dataset of node records, \ie each unit-weight edge is transformed into two node records, each of weight 0.5. In wPINQ, the weights of identical records accumulate, so each node $x$ of degree $d_x$ has weight $\tfrac{d_x}{2}$.
Next, \texttt{Shave} is used convert each node record $x$ into a multiple records $\langle x,i\rangle$ for $i=0,...,d_x$, each with weight $0.5$.
\texttt{Where} keeps only the 0.5-weight record $\langle x,0\rangle$, and \texttt{Select} converts  record $\langle x,0\rangle$ into $x$.  The result is a dataset of nodes, each of weight $0.5$. Note that it is not possible to produce a collection of nodes with unit weight, because  each edge uniquely identifies two nodes; the presence or absence of one edge results in \emph{two} records changed in the output, so a weight of 0.5 is the largest we could reasonably expect from a stable transformation.


\section{Graph Analyses with \lowercase{w}PINQ}\label{sec:graphs}

In \cite{wosn} we showed how the built-in operators in wPINQ can be used for differentially-private analysis of first- and second-order graph statistics; namely, we considered degree and joint-degree distributions, and reproduced the work of \cite{hay} and \cite{sala} as wPINQ algorithms while correcting minor issues in \cite{hay,sala}'s analyses (the requirement that the number of nodes be public and a flaw in the privacy analysis, respectively. We provide a corrected privacy analysis in Appendix~\ref{apx:jddProof}.) These examples demonstrated that wPINQ is capable of expressing and bounding the privacy cost of challenging custom analyses.

We first start by reviewing the algorithms we defined for both degree distribution and joint-degree distribution. Then, we present new results that use wPINQ to compute more sophisticated third- and fourth-order graph statistics. These algorithms suggest new approaches for counting triangles and squares in differentially-private manner (Theorem~\ref{thm:tbd} and Theorem~\ref{thm:square}) which may be of independent interest.



%
\ifnum\full=1
\myparab{Data model. }
One can move to undirected edges and back using wPINQ operators (\texttt{Select} and \texttt{SelectMany}, respectively), resulting in weight $2.0$ for edges in the undirected graph. When comparing with prior work on undirected (assymetric) graphs, we provide equivalent privacy guarantees by doubling the amplitude of the added Laplace noise.
\fi

\subsection{Degree distributions}\label{sec:degreeDist}


Hay \etal~\cite{hay} show that the addition of Laplace noise to the non-decreasing sequence of vertex degrees provides differential privacy, if the number of nodes are publicly known. Because the original values are known to be non-decreasing, much of the noise can be filtered out by post-processing via isotonic regression (regression to an ordered sequence). 
However, requiring that the number of nodes is public makes it difficult for this approach to satisfy differential privacy; we could not apply the technique to a graph on cancer patients without revealing the exact number of patients, for example.

We show that Hay \etal's analysis can be reproduced in wPINQ without revealing the number of nodes, using a program that produces a non-{\em increasing} degree sequence rather than a non-decreasing degree sequence. The measured degree sequence continues indefinitely, with noise added to zeros, and it is up to the analyst to draw conclusions about where the sequence truly ends (perhaps by measuring the number of nodes using differential privacy). We also show how to make postprocessing more accurate by combining measurements with other measurements of the degree sequence, namely its complementary cumulative density function (CCDF).

\myparab{Degree CCDF. }  The degree  CCDF is the functional inverse of the (non-increasing) degree sequence; one can be obtained from the other by interchanging the $x$- and $y$-axes. We start with a wPINQ query for the degree CCDF, which we explain in detail below, and from which we build the degree sequence query:

{\small
\begin{verbatim}
  var degCCDF = edges.Select(edge => edge.src)
                     .Shave(1.0)
                     .Select(pair => pair.index);

  var ccdfCounts  = degCCDF.NoisyCount(epsilon);
\end{verbatim}
}
\noindent
The first step in the query is to apply the \texttt{Select} transformation, 
 which takes a function $f : D \rightarrow R$ and applies it to each input record, resulting in an accumulation of weighted records. Here, \texttt{Select} transforms the dataset from edges with weight 1.0 to a dataset of node names $a$, each weighted by $d_a$.

The next step applies the \texttt{Shave} transformation 
which transforms a dataset of weighted records $(``a", d_a)$ into a collection of indexed records with equal weight (the 1.0 specified as an argument):
$$\{ (``\langle a,0\rangle", 1.0), (``\langle a,1\rangle", 1.0), \ldots (``\langle a,d_a-1\rangle", 1.0) \} \; . $$
Finally, we apply \texttt{Select} to retain only the index $i$ of the pair $\langle a, i \rangle$, obtaining records $i=0,1,2, \ldots$. As wPINQ automatically accumulates weights of identical records, each record $i$ is weighted by the number of graph nodes with degree greater than $i$.
\texttt{NoisyCount} then returns a dictionary of  noised weights for each $i$.

\myparab{Degree Sequence.} We get the degree sequence by \emph{transposing} the $x$- and $y$-axis of the degree CCDF:

{\small
\begin{verbatim}
  var degSeq = degCCDF.Shave(1.0)
                      .Select(pair => pair.index);

  var degSeqCounts  = degSeq.NoisyCount(epsilon);
\end{verbatim}
}
\noindent
\texttt{Shave} and \texttt{Select} are as in the above example, just applied a second time. They transform the weighted set of records $i = 0, 1, \ldots$ (\texttt{degCCDF}) to a set of records $j = 0,1,\ldots$ where the weight of record $j$ is the number of input records $i$ with weight at least $j$, which is exactly the non-decreasing degree sequence.


\myparab{Postprocessing.}
We can think of the space of non-increasing degree sequences as drawing a path $P$ on the two-dimensional integer grid from position $(0, \infty)$ to position $(\infty,0)$ that only steps down or to the right.  Our goal is to find such a path $P$ that fits our noisy degree sequence and noisy CCDF  as well as possible. More concretely, thinking of $P$ as a set of cartesian points $(x,y)$, and given the noisy ``horizontal" ccdf measurements $h$ and the noisy ``vertical" degree sequence measurements $v$, we want to minimize
\begin{equation}\label{eq:minCost}
\sum_{(x,y)\in P}|v[x]-y|+|h[y]-x| \; .
\end{equation}
To do this, we weight directed edges in the two-dimensional integer grid as
\begin{eqnarray*}
cost((x,y) \rightarrow (x+1,y)) & = & |v[x] - y| \\
cost((x,y+1) \rightarrow (x,y)) & = & |h[y] - x| \; .
\end{eqnarray*}
and compute the lowest cost path from $(0,N)$ to $(N,0)$ for some large $N$. To see how this works, notice that the cost of taking a horizontal step from $(x,y)$ to $(x+1,y)$ is a function of the ``vertical'' degree sequence measurement $v$; we do this because taking a horizontal step essentially means that we are committing to the vertical value $y$.  (And vice versa for the vertical step.) Thus, finding the lowest-cost path allows us to simultaneously fit the noisy CCDF and degree sequence while minimizing (\ref{eq:minCost}).

\ifnum\full=1
Computing this lowest-cost path computation is more expensive than the linear time PAVA computation. However, our algorithm constructs edges only as needed, and the computation only visits nodes in the low cost ``trough" near the true measurements, which reduces complexity of computation. Experimentally, the computation takes several milliseconds.
\fi

\ifdsexp
\begin{table}[t]
\begin{center}
\begin{scriptsize}
    \begin{tabular}{|c||r|r|r|r|}
    \hline
    ~          & CA-GrQc & CA-HepPh  & CA-HepTh & ARIN   \\ \hline
    wPINQ      & 53,294 & 121,294 & 99,228 & 143,906    \\
    \cite{hay}'s regression & 1,049  & 3,332   & 1,220  & 1,276   \\
    our regression          & 430    & 1,714   & 371      & 892   \\\hline
    $2$-stars &  1.4e-02 &  9.5e-04 & 6.7-04 & 4.4e-04 \\
    $3$-stars &  4.9e-02 & 6.0e-04 &  3.4e-03 &  2.5e-03 \\ \hline
    \end{tabular}
    \end{scriptsize}
    \vspace{-3mm}
    \caption{Experiments for first order statistics. The first three rows are absolute errors in degree sequence, covering the wPINQ measuments, isotonic regression~\cite{hay}, and our regression. The last two lines are relative errors for 2-stars and 3-stars, comparable with the results in \cite{KRSYgraphs}.}\label{tab:ds-L1}
      \end{center}
      \vspace{-8mm}
\end{table}

\subsubsection{Results}\label{sec:ds:results}

In Table~\ref{tab:graphStats} we present the graphs that we used for testing our algorithms, and in Figure~\ref{fig:ds-cahepph} we show the result of our wPINQ degree sequence measurements on the HepPh graph, before and after postprocessing. (Rather than explicitly showing original (very noisy) wPINQ measurement in Figure~\ref{fig:ds-cahepph}, we present its standard deviation instead.)  We repeated this experiment on three other graphs, and the result is presented in Table~\ref{tab:ds-L1}; we present the $\ell_1$ distance between the true degree sequence and our original noisy wPINQ output, as well as after our postprocessing approach, and the one proposed in \cite{hay}. (We compute $\ell_1$ error on $|V|$ values of the degree sequence, where $|V|$ is the true number of nodes in the secret graph.)
In all cases, we can observe our regression cleaning up the noisy wPINQ measurements more accurately than that of \cite{hay}.

\myparab{2- and 3-stars.} Karwa~\etal \cite{KRSYgraphs} design differentially-private algorithms to count $k$-stars in a graph, where a $k$-star consists of a central vertex connected
to $k$ other vertices. Since $k$-stars can be computed directly from the degree sequence as $\sum_{i \in V} {\binom{d_i}{k}}$, we present the relative error obtained when we use differentially-private degree sequence (after postprocessing) to compute $k$-stars; using Figure 4 in \cite{KRSYgraphs}, our results seem comparable for GrQc and HepPh, and significantly better for HepTh.


\begin{figure}[t]
\begin{center}
\includegraphics[width=3in]{figures/hay_magnify}
\vspace{-3mm}
\caption{Degree Sequences for CA-HepPh. Both forms of regression are substantially closer to the true sequence than one standard deviation of the noisy wPINQ measurement.}\label{fig:ds-cahepph}
\vspace{-5mm}
\end{center}
\end{figure}

\fi

\subsection{Joint degree distribution}\label{sec:jdd}

Sala \etal~\cite{sala} investigate the problem of reporting the JDD, \ie number of edges incident on vertices of degree $d_a$ and $d_b$. After non-trivial analysis, \ifnum\full=1(reproduced by us in Appendix~\ref{apx:jddProof}), \fi they determine that adding Laplace noise with parameter $4\max(d_a, d_b) / \epsilon$ to each count provides $\epsilon$-differential privacy.
This result bypasses worst-case sensitivity, because number of nodes $|V|$ and maximum degree $d_{\max}$ do not appear in the noise amplitude. However, \cite{sala}'s approach had a very subtle flaw in its privacy analysis: the count for $\langle d_a,d_b\rangle$ was released \emph{without} noise if there was no edge in the graph that was incident with nodes of degrees $d_a$ and $d_b$. If this is the case for many pairs $\langle d_a, d_b\rangle$, it can reveal a large amount of information about the secret graph. While \cite{sala}'s approach can be corrected by releasing noised values for all pairs $\langle d_a, d_b\rangle$, doing this would harm the accuracy of their experimental results.

We now write a wPINQ query that is similar in spirit to the analysis of Sala \etal, trading a constant factor in accuracy for an automatic proof of privacy (as well as some automatic improvement in Section~\ref{sec:synth}):


\begin{small}
\begin{verbatim}
  // (a, da) for each vertex a
  var degs = edges.GroupBy(e => e.src, l => l.Count());

  // ((a,b), da) for each edge (a,b)
  var temp = degs.Join(edges, d => d.key, e => e.src,
                       (d,e) => new { e, d.val });

  // (da, db) for each edge (a,b)
  var jdd = temp.Join(temp,
                      a => a.edge,
                      b => b.edge.Reverse(),
                      (a,b) => new { a.val, b.val });

  var jddCount = jdd.NoisyCount(epsilon);
\end{verbatim}
\end{small}

\noindent
The first step uses wPINQ's \texttt{GroupBy} transformation, which takes in a key selector function (here \texttt{e => e.src}) and a reducer function (here \texttt{l => l.Count()}), and transforms a dataset into a collection of groups, one group for each observed key, containing those records mapping to the key. The reducer function is then applied to each group.  While \texttt{GroupBy} uses a more complex weight rescaling when input records have unequal weights, 
 when input records all have equal weight (here $1.0$) the weight of output records is simply halved (to $0.5$).
Thus, each element of \texttt{degs} is a 0.5-weight (key, val) pair, where the key $a$ is the name of a (source) vertex, and its value is its degree $d_a$.

%

To obtain the joint degree distribution we use \texttt{Join} twice, first to combine \texttt{edges} with \texttt{degs}, and then second to bring $d_a$ and $d_b$ together for each edge $(a,b)$. After the first \texttt{Join}, the dataset \texttt{temp} contains records of the form $\langle\langle a,b\rangle,d_a\rangle$. Per (\ref{eq:joinweight})  each has weight
$$ \frac{0.5 \times 1.0}{0.5 + d_a} = \frac{1}{1 + 2d_a} \; .$$
Finally, we \texttt{Join}  \texttt{temp} with itself, to match pairs of the form $\langle\langle a,b\rangle,d_a\rangle $ with $\langle \langle b,a\rangle,d_b\rangle $ using key $\langle a,b\rangle$.  There is exactly one match for each output record $\langle d_a,d_b\rangle$ so applying (\ref{eq:joinweight}) gives weight
\begin{equation}\label{eq:jdd:weight}
\frac{\frac{1}{1 + 2d_a} \times \frac{1}{1 + 2d_b}}{\frac{1}{1 + 2d_a} + \frac{1}{1 + 2d_b}}   =  \frac{1}{2 + 2d_a + 2d_b} \; .
\end{equation}

\noindent
If we multiply the results of \texttt{NoisyCount} for record $\langle d_a, d_b\rangle $ by the reciprocal of \eqref{eq:jdd:weight} we get a noised count with error proportion to  $2 + 2d_a + 2d_b$.
\ifnum\full=1
Although this appears better than Sala \etal's result, this number doesn't tell the full story.
Our approach produces measurements for both $\langle d_a, d_b\rangle$ and $\langle d_b, d_a\rangle$, giving twice the weight if we combine them before measuring with \texttt{NoisyCount}. At the same time, Sala \etal use undirected graphs, giving twice the privacy guarantee for the same value of epsilon, canceling the previous improvement.
\fi
We used the input dataset four times, so by the remark in Section~\ref{sec:trans} our privacy cost is $4\epsilon$.  Therefore, to give a comparable privacy bound to Sala~\etal, we should add noise of amplitude $8 + 8d_a + 8d_b$ to the true count in each $\langle d_a,d_b\rangle$ pair, a result that is worse than Sala \etal's bespoke analysis by a factor of between two and four.


\subsection{Triangles by degree (TbD)}\label{sec:tribydeg}

We now work through an example wPINQ analysis to count the number of triangles incident on vertices of degrees $d_a$, $d_b$, and $d_c$, for each triple $\langle d_a, d_b, d_c\rangle$. This statistic is useful for the graph generation model of~\cite{sigcomm06}. The idea behind our algorithm is that if a triangle $\langle a,b,c\rangle$ exists in the graph, then paths $abc$, $cab$, and $bca$ must exist as well.  The algorithm forms each path $abc$ and pairs it with the degree of its internal node, $d_b$. The paths are then joined together, each time with a different rotation, so that $abc$ matches $cab$ and $bca$, identifying each triangle and supplying the degrees of each incident vertex.

The first action turns the undirected set of edges into a symmetric directed graph, by concatenating \texttt{edges} with its transpose.

\begin{small}
\begin{verbatim}
  var edges = edges.Select(x => new { x.dst, x.src })
                   .Concat(edges);
\end{verbatim}
\end{small}

\noindent
The implication of this transformation is that each subsequent use of \texttt{edges} will in fact correspond to two uses of the undirected source data.

We now compute length-two paths $\langle a, b, c \rangle$ by joining the set of edges with itself. We then use the \texttt{Where} transformation to discard length-two cycles (paths of the form $\langle a, b, a \rangle$).

\begin{small}
\begin{verbatim}
  var paths = edges.Join(edges, x => x.dst, y => y.src,
                         (x,y) => new Path(x, y))
                   .Where(p => p.a != p.c);
\end{verbatim}
\end{small}
\noindent
As in the example of Section~\ref{sec:join:wPINQ} the weight of $\langle a, b, c \rangle$ is $\tfrac1{2d_b}$.

We next determine the degree of each vertex with a \texttt{GroupBy}, producing pairs $\langle v, d_v \rangle$ of vertex and degree, as in the example in Section~\ref{sec:groupby}. We join the result with the path triples $\langle a, b, c\rangle$, producing pairs of path and degree $\langle \langle a ,b, c \rangle, d_b\rangle$.

\begin{small}
\begin{verbatim}
  var degs = edges.GroupBy(e => e.src, l => l.Count());
  var abc = paths.Join(degs, abc => abc.b, d => d.key,
                        (abc,d) => new { abc, d.val });
\end{verbatim}
\end{small}

\noindent
The pairs \texttt{degs} produced by \texttt{GroupBy} each have weight $1/2$, and joining them with \texttt{paths} results in records with weight
$$ \frac{\frac{1}{2} \times \frac{1}{2d_b}}{\frac{1}{2} + {d_b(d_b-1) \times \frac{1}{2d_b}}} = \frac{1}{2d_b^2} \; ,$$
per equation~(\ref{eq:joinweight}). The number of length-two paths through $b$ is $d_b(d_b-1)$ rather than $d_b^2$ because we discarded cycles.

We next use \texttt{Select} to create two rotations of $\langle \langle a ,b, c \rangle, d_b\rangle$, namely $\langle \langle b, c, a \rangle, d_b\rangle$ and $\langle \langle c, a, b \rangle, d_b\rangle$. The resulting path-degree pairs hold the degrees of the first and third vertices on the path, respectively. Because they were produced with \texttt{Select}, their weights are unchanged.

\begin{small}
\begin{verbatim}
  var bca = abc.Select(x => { rotate(x.path), x.deg });
  var cab = bca.Select(x => { rotate(x.path), x.deg });
\end{verbatim}
\end{small}

\noindent
We join each of the permutations using the length-two path as the key, retaining only the triple of degrees $\langle \langle d_a,d_b,d_c \rangle$.

\begin{small}
\begin{verbatim}
  var tris = abc.Join(bca, x => x.path, y => y.path,
                      (x,y) => new { x.path, x.deg, y.deg })
                .Join(cab, x => x.path, y => y.path,
                      (x,y) => new { y.deg, x.deg1, x.deg2 });
\end{verbatim}
\end{small}

\noindent
Each \texttt{Join} will be a unique match, since each path $\langle b, c, a \rangle$ occurs only once in the dataset. The weight of records in the output of the first \texttt{Join} is therefore
$$ \frac{\frac{1}{2d_b^2}\times \frac{1}{2d_c^2}}{\frac{1}{2d_b^2} + \frac{1}{2d_c^2}}  = \frac{1}{2d_b^2 + 2d_c^2} \; .$$
By the same reasoning, the weight of records in the output of the second \texttt{Join} (\ie the \texttt{tris} datasets), reflecting all three rotations and degrees, will be
\begin{equation}\label{eq:triByDegScale}
\frac1{2(d_a^2 + d_b^2 + d_c^2)} \; .
\end{equation}

Finally, we sort each degree triple so that all six permutations of the degree triple $\langle d_a,d_b,d_c \rangle$ all result in the same record. We use \texttt{NoisyCount} to measure the total weight associated with each degree triple.

\begin{small}
\begin{verbatim}
  var order = triangles.Select(degrees => sort(degrees));
  var output = order.NoisyCount(epsilon);
\end{verbatim}
\end{small}

\noindent
Each triangle $\langle a, b, c \rangle$ contributes weight $1/2(d_a^2 + d_b^2 + d_c^2)$ six times, increasing the weight of $\langle d_a, d_b, d_c \rangle$ by $3/(d_a^2 + d_b^2 + d_c^2)$. Nothing other than triangles contribute weight to any degree triple.
Dividing each reported number by (\ref{eq:triByDegScale}) yields the number of triangles with degrees $d_a, d_b, d_c$, plus noise.

This query uses the input dataset \texttt{edges} eighteen times (three permutations, each using \texttt{edges} three times, doubled due to the conversion (with \texttt{Concat}) to a symmetric directed edge set). To get $\epsilon$-DP we must add Laplace noise with parameter $18/\epsilon$ to each count. Divided by $3/(d_a^2 + d_b^2 + d_c^2)$, the error associated with a triple $\langle d_a, d_b, d_c \rangle$ is a Laplace random variable with parameter $6(d_a^2 + d_b^2 + d_c^2)/\epsilon$. Thus,

\begin{theorem}\label{thm:tbd}
For an undirected graph $G$, let $\Delta(x,y,z)$ be the number of triangles in $G$ incident on vertices of degree $x, y, z$.
The mechanism that releases
$$ \Delta(x,y,z) + \textrm{Laplace}(6 (x^2 + y^2 + z^2) / \epsilon) $$
 for all $x,y,z$ satisfies $\epsilon$-differential privacy.
\end{theorem}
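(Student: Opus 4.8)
The plan is to assemble the theorem from the three ingredients already in place: every operator in the query of \sectionref{sec:tribydeg} is a stable transformation (so their composition is stable, as noted after \definitionref{def:stable}); \texttt{NoisyCount} is an $\epsilon'$-differentially private aggregation; and dividing a released count by a fixed constant is deterministic post-processing, which cannot weaken the guarantee. By \theoremref{thm:compose}, together with the binary-transformation and multiple-use accounting discussed in \sectionref{sec:trans}, it then suffices to (a) confirm the weight carried by each output record, (b) count how many times the sensitive input \texttt{edges} is consumed, and (c) convert the per-record Laplace noise into the claimed parameter.

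For step (a) I would trace the weights exactly as in the body. The \texttt{Concat} symmetrization makes both orientations of every edge available; joining \texttt{edges} with itself and discarding $2$-cycles yields each length-two path $\langle a,b,c\rangle$ with weight $\tfrac{1}{2d_b}$, and joining with the degree pairs from \texttt{GroupBy} gives $\langle\langle a,b,c\rangle,d_b\rangle$ with weight $\tfrac{1}{2d_b^2}$ by \equationref{eq:joinweight}. The two further self-joins against the rotations $bca$ and $cab$ are unique matches, so \equationref{eq:joinweight} collapses the three quadratic denominators into the single weight $\tfrac{1}{2(d_a^2+d_b^2+d_c^2)}$ of \equationref{eq:triByDegScale}. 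The key combinatorial point to verify is that an undirected triangle on vertices of degrees $x,y,z$ is traversed by exactly six oriented length-two paths (its three cyclic rotations in each of the two directions), and that only a genuine triangle produces records surviving both path-matching joins. Each such path yields one record that the final \texttt{Select} sorts to the canonical triple, so the weight accumulated on $\langle x,y,z\rangle$ is $6\cdot\tfrac{1}{2(x^2+y^2+z^2)}=\tfrac{3}{x^2+y^2+z^2}$ times the number $\Delta(x,y,z)$ of such triangles. One should also note that this count is unaffected when some of $x,y,z$ coincide, since all six orientations still sort to the same record.

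For step (b) I would read the number of uses of \texttt{edges} off the query plan: each of the three rotations uses \texttt{edges} three times, for nine uses, which the \texttt{Concat} symmetrization doubles to $18$ in total. By the multiple-use rule of \sectionref{sec:trans}, running \texttt{NoisyCount} with parameter $\epsilon' = \epsilon/18$, \ie adding $\textrm{Laplace}(18/\epsilon)$ noise to each record, makes the whole measurement $\epsilon$-differentially private. Finally, since dividing a released value by the fixed constant $\tfrac{3}{x^2+y^2+z^2}$ is deterministic post-processing, it preserves $\epsilon$-DP while turning the noise into $\textrm{Laplace}(18/\epsilon)\cdot\tfrac{x^2+y^2+z^2}{3}=\textrm{Laplace}(6(x^2+y^2+z^2)/\epsilon)$, which is exactly the claimed release.

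The main obstacle is the privacy accounting in step (b): the factor of $18$ must be argued tightly, because an incorrect count directly changes the noise magnitude. This requires applying the multiple-use rule ("a dataset used $k$ times yields $k\epsilon$") to the true number of static occurrences of \texttt{edges} in the expanded plan, counting the hidden doubling from \texttt{Concat} and the three reuses inside each rotation, rather than to the number of \texttt{Join} nodes alone. The weight bookkeeping in step (a) is routine once \equationref{eq:joinweight} is invoked at each join, the only genuinely checkable claim there being the "exactly six oriented paths per triangle" fact.
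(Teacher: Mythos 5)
Your proposal is correct and follows essentially the same route as the paper: the same weight bookkeeping via \equationref{eq:joinweight} (paths at $\tfrac{1}{2d_b}$, then $\tfrac{1}{2d_b^2}$, then $\tfrac{1}{2(d_a^2+d_b^2+d_c^2)}$, with six orientations giving $\tfrac{3}{x^2+y^2+z^2}\Delta(x,y,z)$), the same count of eighteen uses of \texttt{edges} (three per rotation, doubled by the \texttt{Concat} symmetrization), and the same final rescaling of $\textrm{Laplace}(18/\epsilon)$ noise into the claimed parameter. Nothing to add.
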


\noindent
Section~\ref{sec:synth} discusses our experiments with this query.

\subsection{Squares by degree (SbD)}


Approaches similar to those used to count triangles can be used to design other subgraph-counting queries.  To illustrate this, we present a new algorithm for counting squares (\ie cycles of length four) in a graph.  The idea here is that if a square $abcd$ exists in the graph, then paths $abcd$, $cadb$ must exist too, so we find these paths and \texttt{Join} them together to discover a square.

The first three steps of the algorithm are identical to those in TbD; we again obtain the collection \texttt{abc} of length-two paths $abc$ along with the degree $d_b$. Next, we join \texttt{abc} with itself, matching paths $abc$ with paths $bcd$, to obtain length-three paths $abcd$ with degrees $d_b$ and $d_c$. We use the \texttt{Where} operator to discard cycles (\ie paths $abca$).

{\small
\begin{verbatim}
var abcd = abc.Join(abc, x => x.bc, y => y.ab, 
                     (x, y) => 
                     { new Path(x.bc,y.ab), y.db, x.db })
              .Where(y => y.abcd.a != y.abcd.d);
\end{verbatim}
}

\noindent
We then use \texttt{Select} to rotate the paths in \texttt{abcd} twice:
{\small
\begin{verbatim}
var cdab = abcd.Select(x => 
             { rotate(rotate(x.abcd)), x.db, x.dc });
\end{verbatim}
}

\noindent
If a square $abcd$ exists in the graph, then record $(abcd, d_d, d_a)$ will be in the rotated set \texttt{cdab}.  We therefore join \texttt{abcd} with \texttt{cdab}, using the path as the joining key, to collect all four degrees $d_a,d_b,d_c,d_d$. Sorting the degrees coalesces the eight occurrences of each square (four rotations in each direction).

{\small
\begin{verbatim}
var squares = abcd.Join(cdab, x => x.abcd, y => y.abcd,
                (x, y) => new { y.da, x.db, x.dc, y.dd });

var order  = squares.Select(degrees => sort(degrees));
var output = order.NoisyCount(epsilon);
\end{verbatim}
}


\ifnum\full=1 

\myparab{Analysis of output weights.} As in the TbD, we obtain the tuples $\langle a,b,c, db \rangle$ with weight $1 \over 2d_{b}^{2}$. Then we \texttt{Join} \texttt{abc$d_b$} with itself 
to obtain length-three paths passing through the edge (b,c); there are $(d_b - 1)(d_c -1)$ length-three paths passing through this edge, of which $d_b -1$ have weight $1 \over 2d_{b}^{2}$ and $d_c -1$ have weight $1 \over 2d_{c}^{2}$. Each record in \texttt{abcd} therefore has weight
\begin{equation}
{{{1 \over 2d_{b}^{2}} \times {1 \over 2d_{c}^{2}}} \over {(d_b -1){1 \over 2d_b^2} + (d_c -1){1 \over 2d_c^2}}} = {1 \over {2(d_b^2(d_c-1) + d_c^2(d_b-1))}}
\end{equation}
In the final \texttt{Join}, each record matches at most one other record (every path $\langle a,b,c,d \rangle$ matches exactly one path $\langle c,d,a,b \rangle$ ). The weight of the each record in \texttt{squares} is thus:
\begin{equation}
 1 \over {2(d_a^2(d_d-1)+d_d^2(d_a-1)+d_b^2(d_c-1)+d_c^2(d_b -1))}
\end{equation}
The final step orders the quadruples $\langle d_a,d_b,d_c,d_d \rangle$ which increases the weight of each square incident the quadruple by a factor of $8$, because each square is discovered eight times (\ie all rotations of $a,b,c,b$ and all rotations of $d,c,b,a$).

This time, the input dataset is used 12 times, so the privacy cost is $12\epsilon$.   Adding the an additional factor of two to account for using directed graphs rather than undirected graphs, we obtain the following new result:

\else

An analysis of the algorithm (deferred to our technical report) leads to the following new result:

\fi

\begin{theorem}\label{thm:square}
For an undirected graph $G$, let $\Box(v,x,y,z)$ be the number of cycles of length 4 in $G$ incident on vertices of degree $v,x,y,z$.
The mechanism that releases
$$ \Box(v,x,y,z) + \textrm{Laplace}(6 (vx(v + x) + yz(y + z)) / \epsilon) $$
for all $v,x,y,z$ satisfies $\epsilon$-differential privacy.
\end{theorem}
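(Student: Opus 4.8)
The plan is to treat the displayed mechanism as exactly the output of the \texttt{wPINQ} program built up in this subsection (forming \texttt{abc}, then \texttt{abcd}, then the double-rotation \texttt{cdab}, the final self-\texttt{Join} into \texttt{squares}, the sorting \texttt{Select}, and finally \texttt{NoisyCount}), and to establish privacy in the same two-part style as \theoremref{thm:tbd}: a structural part that fixes the overall privacy multiplier, and a quantitative part that tracks record weights so as to calibrate the noise. For the structural part, every operator used (\texttt{Select}, \texttt{Where}, \texttt{Concat}, \texttt{GroupBy}, and \texttt{Join}) is a stable transformation, so by \theoremref{thm:compose} together with the binary-stability and repeated-use accounting of \sectionref{sec:trans}, the composition followed by an $\epsilon'$-DP \texttt{NoisyCount} is $(k\epsilon')$-differentially private, where $k$ is the number of times the leaf dataset \texttt{edges} appears in the query plan. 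Thus it suffices to (i) determine $k$ and (ii) compute the weight attached to each output record, after which I set $\epsilon'=\epsilon/k$ and rescale.

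For the quantitative part I would push weights through the program using the \texttt{Join} rule \equationref{eq:joinweight}. Reusing the \texttt{abc} analysis from \sectionref{sec:tribydeg}, each length-two path $\langle a,b,c\rangle$ carries weight $1/(2d_b^2)$. The first self-join (matching $\langle a,b,c\rangle$ on edge $(b,c)$ with $\langle b,c,d\rangle$ on its first edge) has, for key $(b,c)$, norms $(d_b-1)/(2d_b^2)$ and $(d_c-1)/(2d_c^2)$; \equationref{eq:joinweight} then gives each length-three path $\langle a,b,c,d\rangle$ weight $1/\bigl(2(d_b^2(d_c-1)+d_c^2(d_b-1))\bigr)$. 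The final \texttt{Join} of \texttt{abcd} against \texttt{cdab} is a unique match (each $\langle a,b,c,d\rangle$ meets only $\langle c,d,a,b\rangle$), so the two keyed restrictions are singletons and \equationref{eq:joinweight} yields the per-square weight
\[
\frac{1}{2\bigl(d_b^2(d_c-1)+d_c^2(d_b-1)+d_a^2(d_d-1)+d_d^2(d_a-1)\bigr)}.
\]
Writing $S_{uv}=d_u d_v(d_u+d_v)-(d_u^2+d_v^2)$, this is $1/(2(S_{bc}+S_{da}))$, i.e. the two terms pair up the two \emph{opposite} edges $(b,c)$ and $(d,a)$ of the cycle.

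The remaining steps are the coalescing and the calibration. The sorting \texttt{Select} merges the traversals of a given $4$-cycle, so its weight is multiplied by the number of traversals that land on the same output record; I then choose $\epsilon'=\epsilon/k$ so the release is $\epsilon$-DP, divide the \texttt{NoisyCount} value by this (public) weight, and finally bound $S_{uv}\le d_u d_v(d_u+d_v)$ to replace the exact denominator by $vx(v+x)+yz(y+z)$ with $(v,x)=(d_b,d_c)$ and $(y,z)=(d_a,d_d)$. Because this substitution only \emph{enlarges} the reported noise scale, it can only strengthen the privacy guarantee, so the stated mechanism remains $\epsilon$-differentially private.

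I expect the coalescing bookkeeping to be the main obstacle. The eight traversals of a $4$-cycle split into two groups of four that align with the two opposite-edge pairs $\{(b,c),(d,a)\}$ and $\{(a,b),(c,d)\}$, and these groups carry the \emph{different} weights $1/(2(S_{bc}+S_{da}))$ and $1/(2(S_{ab}+S_{cd}))$ unless the degrees happen to be symmetric. To justify the clean factor-of-eight claim I must either index the output by the opposite-edge pairing (so that $(v,x)$ and $(y,z)$ are exactly the two opposite edges and only four identical-weight traversals coalesce) or argue via a per-record upper bound on the weight so that the conservative noise scale of the statement is always admissible. A secondary but necessary check is the leaf-count: the double-\texttt{Join} structure makes \texttt{edges} appear twelve times in the plan, and the earlier \texttt{Concat} to a symmetric directed edge set contributes an extra factor of two, for an effective multiplier of $24$; combined with the eightfold coalescing this is exactly what turns $24/\epsilon$ on the weighted count into $6(vx(v+x)+yz(y+z))/\epsilon$ on the count.
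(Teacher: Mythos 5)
Your proposal follows essentially the same route as the paper's own analysis: the paper derives the identical per-record weights ($1/(2d_b^2)$ for length-two paths, $1/\bigl(2(d_b^2(d_c-1)+d_c^2(d_b-1))\bigr)$ for length-three paths via the same key-restricted norms, and $1/\bigl(2(d_a^2(d_d-1)+d_d^2(d_a-1)+d_b^2(d_c-1)+d_c^2(d_b-1))\bigr)$ per square from the unique-match final \texttt{Join}), counts twelve uses of the directed edge set, doubled to an effective twenty-four by the \texttt{Concat} symmetrization, and divides out an eightfold coalescing to reach the stated noise scale. The one place you go beyond the paper is precisely the ``main obstacle'' you flag: the paper simply asserts that sorting ``increases the weight of each square \ldots by a factor of $8$,'' silently identifying the two opposite-edge pairings, whereas you correctly observe that the eight traversals split into two groups of four whose weights $1/\bigl(2(S_{bc}+S_{da})\bigr)$ and $1/\bigl(2(S_{ab}+S_{cd})\bigr)$ generally differ. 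Your fallback resolution --- bounding each per-record weight via $S_{uv}\le d_u d_v(d_u+d_v)$ so that the conservative noise scale in the statement is always admissible --- is a sound way to close this, and it additionally explains the paper's otherwise-unexplained passage from the exact denominators $d_u^2(d_v-1)+d_v^2(d_u-1)$ to the cleaner $vx(v+x)+yz(y+z)$ appearing in the theorem.
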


\subsection{Counting arbitrary motifs}

Motifs are small subgraphs, like triangles or squares, whose prevalence in graphs can indicate sociological phenomena. The approach we have taken, forming paths and then repeatedly \texttt{Join}ing them to tease out the appropriate graph structure, can be generalized to arbitrary connected subgraphs on $k$ vertices. However, the analyses in this section were carefully constructed so that all records with the same degrees had exactly the same weight, allowing us to measure them separately and exactly interpret the meaning of each measurement. More general queries, including those for motifs, combine many records with varying weights, complicating the interpretation of the results. Fortunately, we introduce techniques to address this issue in the next section.

%
%

\section{Synthesizing input datasets}
\label{sec:pi}

DP queries can produce measurements that are not especially accurate, that exhibit inherent inconsistencies (due to noise), or that may not be directly useful for assessing other statistics of interest.
One approach to these problems is the use of \emph{probabilistic inference}~\cite{pi}, in which the precise probabilistic relationship between the secret dataset and the observed measurements is used, via Bayes' rule, to produce a posterior distribution over possible datasets. The posterior distribution integrates all available information about the secret dataset in a consistent form, and allows us to sample synthetic datasets on which we can evaluate arbitrary functions, even those without privacy guarantees.

While previous use of probabilistic inference (see \eg~\cite{pi}) required human interpretation of the mathematics and custom implementation, wPINQ automatically converts all queries into an efficient Markov chain Monte Carlo (MCMC) based sampling algorithm. This automatic conversion is accomplished by an incremental data-parallel dataflow execution engine which, having computed and recorded the measurements required by the analyst, allows MCMC to efficiently explore the space of input datasets by repeatedly applying and evaluating small changes to synthetic inputs. Note that while this process does use the noisy wPINQ measurements, it no longer uses the secret input dataset; all synthetic datasets are public and guided only by their fit to the released differentially private wPINQ measurements.

\subsection{Probabilistic Inference}

The main property required for a principled use of probabilistic inference is an exact probabilistic relationship between the unknown input (\eg the secret graph) and the observed measurements (\eg the noisy answers to wPINQ queries). Although the input is unknown, we can draw inferences about it from the measurements if we know how likely each possible input is to produce an observed measurement.

For each query $Q$, dataset $A$, and measured observation $m$, there is a well-specified probability, $\Pr[Q(A) + Noise = m]$, that describes how likely each dataset $A$ is to have produced an observation $m$ under query $Q$. For example, when adding Laplace noise with parameter $\epsilon$ to multiple counts defined by $Q(A)$ (Section~\ref{sec:agg}), that probability is
\begin{equation}\label{eq:pMgivenA}
 \Pr[Q(A) + Noise = m] \propto \exp(\epsilon \times \|Q(A) - m\|_1) \; .
\end{equation}
While this probability is notationally more complicated when different values of $\epsilon$ are used for different parts of the query, its still easily evaluated.

This probability 
informs us about the relative likelihood that dataset $A$ is in fact the unknown input dataset. Bayes' rule shows how these probabilities update a prior distribution to define a posterior distribution over datasets, conditioned on observations $m$:
\begin{align*}
 \Pr[A | m] &= \Pr[m|A] \times \frac{\Pr[A] }{ \Pr[m]} \; .
\end{align*}
$\Pr[m|A]$ is the probability $m$ results from $Q(A) + N$, as in (\ref{eq:pMgivenA}). $\Pr[m]$ is a normalizing term independent of $A$. $\Pr[A]$ reflects any prior distribution over datasets. The result is
\begin{align*}
 \Pr[A | m] & \propto \exp(\epsilon \times \|Q(A) - m\|_1) \times \Pr[A]\; .
\end{align*}
The constant of proportionality does not depend on $A$, so we can use the right hand side to compare the relative probabilities $\Pr[A|m]$ and $\Pr[A'|m]$ for two datasets $A$ and $A'$.

The posterior distribution focuses probability mass on datasets that most closely match the observed measurements, and indirectly on datasets matching other statistics that these measurements constrain.
For example, while Theorem~\ref{thm:tbd} indicates that DP measurement of the TbD requires significant noise to be added to high-degree tuples, the posterior distribution combines information from the TbD with highly accurate DP measurements of degree distribution~\cite{wosn,hay} to focus on graphs respecting both, effectively downplaying TdD measurements resulting primarily of noise and likely improving the fit to the total number of triangles.

However, the posterior distribution is a mathematical object, and we must still compute it, or an approximation to it, before we achieve its desireable properties.

\newpage
\subsection{Metropolis-Hastings}\label{sec:metrohasting}

The posterior probability $\Pr[A|m]$ is over all datasets, an intractably large set. Rather than enumerate the probability for all datasets $A$, modern statistical approaches use sampling to draw representative datasets.
Metropolis-Hastings is an MCMC algorithm that starts from a supplied prior  over datasets, and combines a random walk over these datasets with a function scoring datasets, to produce a new random walk whose limiting distribution is proportional to the supplied scores. Eliding important (but satisfied) techincal assumptions, namely that the random walk be reversible and with a known statitionary distibution, the pseudo-code is:

\vspace{-2mm}
{\small
\begin{verbatim}
  var state;   // initial state

  while (true)
  {
     // random walk proposes new state
     var next = RandomWalk(state);

     // change state with probability Min(1, newScore/old)
     if (random.NextDouble() < Score(next) / Score(state))
        state = next;
  }
\end{verbatim}
}

\noindent
The user  provides an initial value for \texttt{state}, a random walk \texttt{RandomWalk}, and a scoring function \texttt{Score}.

\myparab{Choosing the initial state.} MCMC is meant to converge from any starting point, but it can converge to good answers faster from well-chosen starting points. Although we can start from a uniformly random dataset, we often seed the computation with a random dataset respecting some observed statistics. In the case of graphs, for example, we choose a random `seed' graph matching wPINQ measurements of the secret graph's degree distribution (Section~\ref{sec:workflow}).

\myparab{Choosing the random walk. }
We allow the user to specify the random walk, though a natural default is to replace a randomly chosen element of the collection with another element chosen at random from the domain of all possible input records.  Our random walk for graphs is in Section~\ref{sec:workflow}.
More advanced random walks exist, but elided assumptions (about reversibility, and easy computation of relative stationary probability) emerge as important details.

\myparab{Choosing the scoring function. }  As we are interested in a distribution over inputs that better matches the observed measurements, we will take as scoring function
$$ Score_{\langle Q,m\rangle}(A) = \exp(\epsilon \times \|Q(A) - m\|_1 \times pow) \; . $$
Our initial distribution over states is uniform, allowing us to to discard the prior distribution $\Pr[A]$ in the score function, and so when $pow = 1$ this score function results in a distribution proportional to $\exp(\epsilon \times \|Q(A) - m\|_1)$, proportional to the posterior distribution suggested by probabilistic inference.
The parameter $pow$ allows us to focus the output distribution, and make MCMC behave more like a greedy search for a single synthetic dataset. Large values of $pow$ slow down the convergence of MCMC, but eventually result in outputs that more closely fit the measurements $m$.

\subsection{Incremental query evaluator}\label{sec:imp}

In each iteration, MCMC must evaluate a scoring function on a small change to the candidate dataset. This scoring function essentially evaluates a wPINQ query on the changed data, which can be an expensive computation to perform once, and yet we will want to perform it repeatedly at high speeds. We therefore implement each wPINQ query as an \emph{incrementally updateable computation}, using techniques from the view maintenance literature. Each MCMC iteration can thus proceed in the time it takes to incrementally update the computation in response to the proposed small change to the candidate dataset, rather than the time it takes to do the computation from scratch. We now describe this incremental implementation and its limitations.

To incrementally update a computation, one needs a description of how parts of the computation depend on the inputs, and each other. This is often done with a directed dataflow graph, in which each vertex corresponds to a transformation and each edge indicates the use of one transformation's output as another transformation's input.
As an analyst frames queries, wPINQ records their transformations in such a directed dataflow graph. Once the query is evaluated on the first candidate dataset, a slight change to the input (as made in each iteration of MCMC) can propagate through the acyclic dataflow graph, until they ultimately arrive at the \texttt{NoisyCount} endpoints and incrementally update $\| Q(A)  - m \|_1$ and the score function.


Each wPINQ transformation must be implemented to respond quickly to small changes it its input. Fortunately, all of wPINQ's transformations are \emph{data-parallel}.
A transformation is data-parallel if it can be described as a single transformation applied independently across a partitioning of its inputs. For example, \texttt{Join} is data-parallel because each of its inputs is first partitioned by key, and each part is then independently processed in an identical manner. Data-parallelism is at the heart of the stability for wPINQ's transformations (Section~\ref{sec:trans}):  a few changed input records only change the output of their associated parts. Data-parallelism also leads to efficient incremental implementations, where each transformation can maintain its inputs indexed by part, and only recomputes the output of parts that have changed. As these parts are typically very fine grained (\eg an individual \texttt{Join} key), very little work can be done to incrementally update transformations; outputs produced from keys whose inputs have not changed do not need to be reprocessed.
All of wPINQ's transformations are data-parallel, and are either be implemented as a stateless pipeline operators (\eg \texttt{Select}, \texttt{Where}, \texttt{Concat}) or a stateful operators whose state is indexed by key (\eg \texttt{GroupBy}, \texttt{Intersect}, \texttt{Join}, \texttt{Shave}).
\ifnum\full=0
The details of our incremental implementations of wPINQ's transformations are deferred to our technical report. 

\else
Most of wPINQ's transformations have implementations based on standard incremental updating patterns, maintaining their inputs indexed by key so that changes to inputs can quickly be combined with existing inputs in order to determine the difference between ``before" and ``after" outputs. In several cases the implementations can be optimized; a standard relational \texttt{Join}, for example, uses distributivity to incrementally update its output without explicitly subtracting the old output from the new output:
$$ \overbrace{(A + a) \bowtie (B + b)}^\textrm{new output} = \overbrace{A \bowtie B}^\textrm{old output} + \overbrace{a \bowtie B + A \bowtie b + a \bowtie b}^\textrm{incremental update} \; . $$
In contrast, wPINQ's \texttt{Join} may need to change the weights of all output records with the same key if the normalization in equation (\ref{eq:joinweight}) changes, still far fewer records than in the whole output.
However, in cases where the sum of the input weights for a key are unchanged (\eg when inputs change from one value to another but maintain the same weight), wPINQ's \texttt{Join} is optimized to only perform as much work as the relational \texttt{Join}.  More details are in Appendix~\ref{sec:opsImplement}.
\fi

Two limitations to wPINQ's MCMC performance are computation time and memory capacity. Incrementally updating a computation takes time dependent on the number of weights that change; in complex graph analyses, a single changed edge can propagate into many changed intermediate results. At the same time,  wPINQ maintains state in order to incrementally update a computation; this state also scales with the size of these intermediate results, which grow with the size of the input data size and with query complexity.
In the TbD from Section~\ref{sec:tribydeg}, for example, one edge $(a,b)$ participates in up to $O(d_a^2 + d_b^2)$ candidate triangles, each of which may need to have its weight updated when  $(a,b)$'s weight changes. Also, the final \texttt{Join}
operators in TbD match arbitrary length-two paths, which wPINQ stores indexed in memory; the memory required for the TbD therefore scales as $\sum_v d_v^2$ (the number of length-two paths) which can be much larger than number of edges $\sum_v d_v$.

Depending on the complexity of the query, wPINQ's MCMC iterations can complete in as few as $50$ microseconds, though typical times are closer to hundreds of milliseconds on complex triangle queries. MCMC can require arbitrarily large amounts of memory as the size of the data grows; we have tested it on queries and datasets requiring as many as $64$ gigabytes of memory. Distributed low-latency dataflow systems present an opportunity to scale wPINQ to clusters with larger aggregate memory. 
More detail on running time and memory footprint are reported in Section~\ref{sec:tbi}.

We plan to publicly release the implementation once we have properly cleaned and documented the code.

\section{Experiments}\label{sec:synth}

We now apply our wPINQ's query and probabilistic inference to the workflow for graph synthesis proposed in~\cite{wosn}, extended with further measurements. We start by using wPINQ to produce noisy DP-measurements of a secret graph and then use MCMC to synthesize graphs respecting these noisy measurements. Our approach is similar to that of \cite{sigcomm06}, who identify and measure key graph statistics which constrain broader graph properties.  However, while \cite{sigcomm06} starts from \emph{exact} degree correlation measurements, we start from DP measurements of degree correlation that can be noisy and inconsistent.
In~\cite{wosn}, we presented several positive results of applying our platform to this problem,  generating synthetic graphs that respect degree distribution and/or joint degree distribution of a private graph. We do not reproduce these results here; instead, we present new results for the more challenging problem of counting triangles.

We start by presenting experiments on symmetric directed graphs with where the total privacy cost is a constant times $\epsilon$ where $\epsilon = 0.1$ and MCMC parameter $pow=10,000$ (see Section~\ref{sec:metrohasting}).  We investigate the sensitivity of our results to different values of $\epsilon$ in Section~\ref{sec:tbi}, and then close the loop on the discussion in Section~\ref{sec:imp} by experimenting with the scalability of our platform.


\subsection{A workflow for graph synthesis (from \cite{wosn})}\label{sec:workflow}

To provide necessary background, we briefly reproduce the description of the workflow for graph synthesis that we sketched in~\cite{wosn}.  The workflow begins with the analyst's wPINQ queries to the protected graph. 
Once the queries are executed and noisy measurements are obtained, the protected graph is discarded.  Graph synthesis proceeds, using \emph{only the noisy measurements}, as follows:

\myparab{Phase 1. Create a ``seed'' synthetic graph.}  In~\cite{wosn} we showed wPINQ queries and regression techniques that result in a highly accurate $\epsilon$-differentially private degree sequence. We then seed a simple graph generator that generates a random graph fitting the measured   $\epsilon$-DP degree sequence.  This random graph is the initial state of our MCMC process.

\myparab{Phase 2. MCMC. }  The synthetic graph is then fit to the wPINQ measurements, using MCMC to search for graphs that best fit the measurements. Starting from our seed graph, we  use an {edge-swapping random walk} that preserves the degree distribution of the seed synthetic graph; at every iteration of MCMC, 
we propose replacing two random edges $(a,b)$ and $(c,d)$  with edges $(a,d)$ and $(c,b)$. As MCMC proceeds, the graph evolves to better fit the  measurements.

\begin{table}
\begin{center}
\begin{scriptsize}
\begin{tabular}{|l||r|r|r|r|r|}
  \hline
  Graph & Nodes & Edges &  $d_{\max}$ & $\triangle$ & $r$ \\
 \hline
  CA-GrQc   & 5,242 & 28,980 & 81 & 48,260 &  0.66 \\
  CA-HepPh & 12,008 & 237,010 & 491 & 3,358,499  & 0.63  \\
  CA-HepTh & 9,877 & 51,971 & 65 & 28,339  & 0.27\\
  Caltech & 769 & 33,312 & 248 & 119,563 & -0.06 \\
    Epinions & 75,879 & 1,017,674 & 3,079  & 1,624,481  & -0.01   \\
     Random(GrQc) & 5,242 & 28,992 & 81 & 586 & 0.00\\
  Random(HepPh) & 11,996 & 237,190 & 504 & 323,867  & 0.04\\
  Random(HepTh) & 9,870 & 52,056 & 66 & 322 & 0.05\\
  Random(Caltech) & 771 & 33,368 & 238 & 50,269  & 0.17 \\
  Random(Epinion) & 75,882 & 1,018,060 & 3,085  &  1,059,864 &  0.00  \\

  \hline
\end{tabular}
\end{scriptsize}
\vspace{-3mm}
\caption{Graph statistics:  number of triangles ($\Delta$), assortativity ($r$), and maximum degree $d_{\max}$.
}
\label{tab:graphStats}
\vspace{-8mm}
\end{center}
\end{table}


\subsection{Evaluating Triangles by Degree (TbD)}\label{sec:exp:tribydeg}

%


Our goal is now to combine MCMC with measurements of third-order degree correlations in order to draw inferences about graph properties we are not able to directly measure: specifically, assortativity $r$ and the number of triangles $\triangle$. We find that it can be difficult to exactly reconstruct detailed statistics like the number of triangles with specified degrees, but that these measurements nonetheless provide information about aggregate quantities like $r$ and $\triangle$.

We start by considering generating synthetic graphs using our triangles by degree (TbD) query described in Section \ref{sec:tribydeg}.   While this query has appealing bounds for small degrees, it still requires each $\langle d_a,d_b,d_c\rangle$ triple to have its count perturbed by noise proportional to $6({d_a^2 + d_b^2 + d_c^2})/\epsilon$. For large degrees, many of which will not occur in the secret graph, the results are almost entirely noise. Unable to distinguish signal from noise, MCMC makes little  progress.
%

Fortunately, the flexibility of our workflow allows for a simple remedy to this problem: instead of counting each degree triple individually, we first group triples into buckets with larger cumulative weight. The noise added to each bucket remains the same, but the weight (signal) in each bucket increases, which enables MCMC to distinguish graphs with different numbers of triangles at similar degrees.

%
The following modification to the code in Section~\ref{sec:tribydeg} replaces each degree by the floor of the degree divided by $k$, effectively grouping each batch of $k$ degrees into a bucket:

\begin{small}
\begin{verbatim}
 var degs = edges.GroupBy(e => e.src, l => l.Count()/k);
\end{verbatim}
\end{small}

\noindent
MCMC will automatically aim for synthetic graphs whose bucketed degree triples align with the measurements, without any further instruction about how the degrees relate.

\myparab{Experiments.} We experiment with our workflow using the graph CA-GrQc~\cite{leskovec2007graph} (see statistics in Table~\ref{tab:graphStats}). We first generate the `seed' synthetic graph that is fit to wPINQ measurements of (a) degree sequence, (b) degree complementary cumulative density function, and (c) count of number of nodes (see~\cite{wosn} for details); the privacy cost of generating the seed synthetic graph is $3\epsilon=0.3$.  MCMC then fits the seed graph to TbD measurements with privacy cost $9\epsilon=0.9$ for $\epsilon=0.1$, so the total privacy cost of this analysis is $0.9+0.3=1.2$.

Figure~\ref{fig:tbd} shows MCMC's progress (after $5 \times 10^{6}$ steps), plotting the number of triangles $\triangle$ and the assortativity $r$ in the synthetic graph as MCMC proceeds. Figure~\ref{fig:tbd} also plots MCMC's behavior when the secret graph is ``Random(GrQc)'', a random graph with the same degree distribution as CA-GrQc but with very few triangles, used to see if MCMC can distinguish the two.
The figures reveal that MCMC is only able to distinguish the two graphs when bucketing is used ($k = 20$), but still does not find graphs respecting the input graph properties. This is likely due to a lack of signal in the result of the TbD query, which may be compounded by the approximate nature of MCMC or the restricted random walk we have chosen for it to use.

\begin{figure}[t]
\begin{center}
\vspace{-4mm}
\hspace*{-0.2in}
\includegraphics[width=0.5\textwidth]{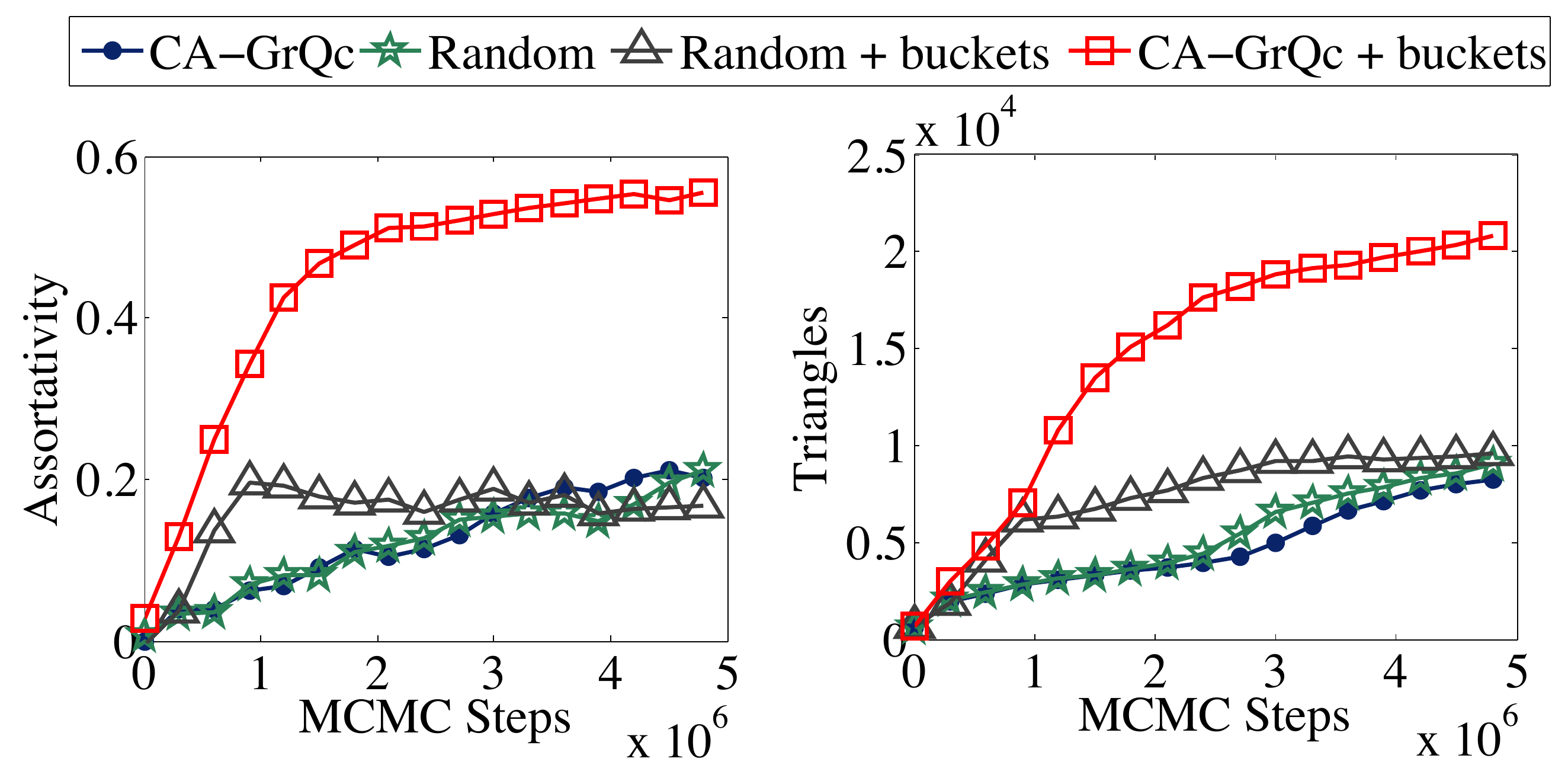}
\vspace{-8mm}
\caption{Behavior of the TbD query with and without bucketing for CA-GrQc. The values on CA-GrQc are $r = 0.66$, $\triangle = 48,260$ and on the sanity check $r = 0.0$, $ \triangle = 586$. Bucketing substantially improves accuracy.}\label{fig:tbd}\vspace{-5mm}
\end{center}
\end{figure}

\ifnum\full=1
Indeed, we found that for $\epsilon=0.2$ (which is what we used in our measurements), the available signal in the TdB is below the noise level for almost every bucket apart from the lowest degree one. Indeed, the total weight available in the TbD for the true GrQc graph is only $\approx89$, and when we bucket by 20, we find that $83\%$ of that signal is concentrated in the lowest-degree bucket. Meanwhile, Laplace noise with parameter $\epsilon=0.2$ has amplitude is $5$, so its not surprising that the signal is dwarfed by the noise almost everywhere. The limited signal means that (with the exception of the lowest-degree bucket) MCMC will find a synthetic graph that is fitting the noise in the TbD, rather than any useful information about triangles from the protected graph GrQc. Moreover, as the number of signal-free buckets increases, we have more opportunities for the zero-mean Laplace noise in that bucket to ``mislead'' MCMC into thinking there are triangles where there are none.
\fi

\subsection{Evaluating Triangles by Intersect (TbI)}\label{sec:tbi}


The TbD query of Section~\ref{sec:tribydeg} has the property that the measurements
%
are relatively easy to interpret, but this does not neccesarily translate into good performance. We now consider an alternate query, whose results are harder to interpret directly, but that gives much better results when run through our MCMC workflow.
%
%
The query, Triangles By Intersect (TbI) measures a single quantity related to the number of total triangles, reducing the amount of noise introduced. The algorithm is again based on the observation a triangle $abc$ exists if and only if both of the length-two paths $abc$ and $bca$ exist. This time, however, we create the collection of length two paths, and \emph{intersect} it with itself after appropriately permuting each path:

\begin{small}
\begin{verbatim}
  // form paths (abc) for a != c with weight 1/db
  var paths = edges.Join(edges, x => x.dst, y => y.src,
                         (x,y) => new Path(x, y))
                   .Where(p => p.a != p.c);

  //rotate paths and intersect for triangles
  var triangles = paths.Select(x => rotate(x))
                       .Intersect(paths);

  //count triangles all together
  var result =  triangles.Select(y => "triangle!")
                         .NoisyCount(epsilon);
\end{verbatim}
\end{small}

\ifnum\full=0
Deferring the analysis of weights to our technical report, we state the value TbI produces before wPINQ applies noise:
\else
The analysis of weights is as follows.
As in Section~\ref{sec:tribydeg}, \texttt{Join} creates length-two  paths \texttt{abc} each with weight ${1 \over 2 d_b}$. Because the graph is symmetric (\ie both edges $(a,b)$ and $(b,a)$ are present), a single triangle $(a,b,c)$ will contribute six records to the \texttt{abc} collection; paths $(b,a,c)$ and $(c,a,b)$ each with weight $\tfrac1{2d_a}$, two paths with weight $\tfrac1{2d_b}$, and two paths with $\tfrac1{2d_c}$.   Permuting the paths \texttt{abc} dataset to obtain \texttt{bca} using \texttt{Select} does not alter record weights, and since \texttt{Intersect} takes the minimum weight of the participating elements,  the dataset \texttt{triangles} now contains two records with weight $\min\{\tfrac1{2d_a},\tfrac1{2d_b}\}$, two with weight $\min\{\tfrac1{2d_a},\tfrac1{2d_c}\}$, and two with weight $\min\{\tfrac1{2d_b},\tfrac1{2d_c}\}$ for each triangle $(a,b,c)$ in the graph.  Finally \texttt{Select} aggregates all the weight to obtain a \emph{single} count with weight
\fi
\begin{equation}\label{eq:tbiWeight}
\sum_{\Delta(a,b,c)} \min \{\tfrac1{d_a}, \tfrac1{d_b}\}
+\min \{\tfrac1{d_a}, \tfrac1{d_c}\}
+\min \{\tfrac1{d_b}, \tfrac1{d_c}\}
\end{equation}
TbI uses the input dataset four times, incurring a privacy cost of $4\epsilon$, which is less than the $9\epsilon$ incurred by TbD.

Notice that the TbI outputs only a single noised count, rather than counts by degree. While this single count might be difficult for a human to interpret, it contains information about triangles, and MCMC will work towards finding a synthetic graph that fits it.


\begin{figure}[t]
\begin{center}
 \vspace{-3mm}
\includegraphics[width=0.5\textwidth,height=2.6in]{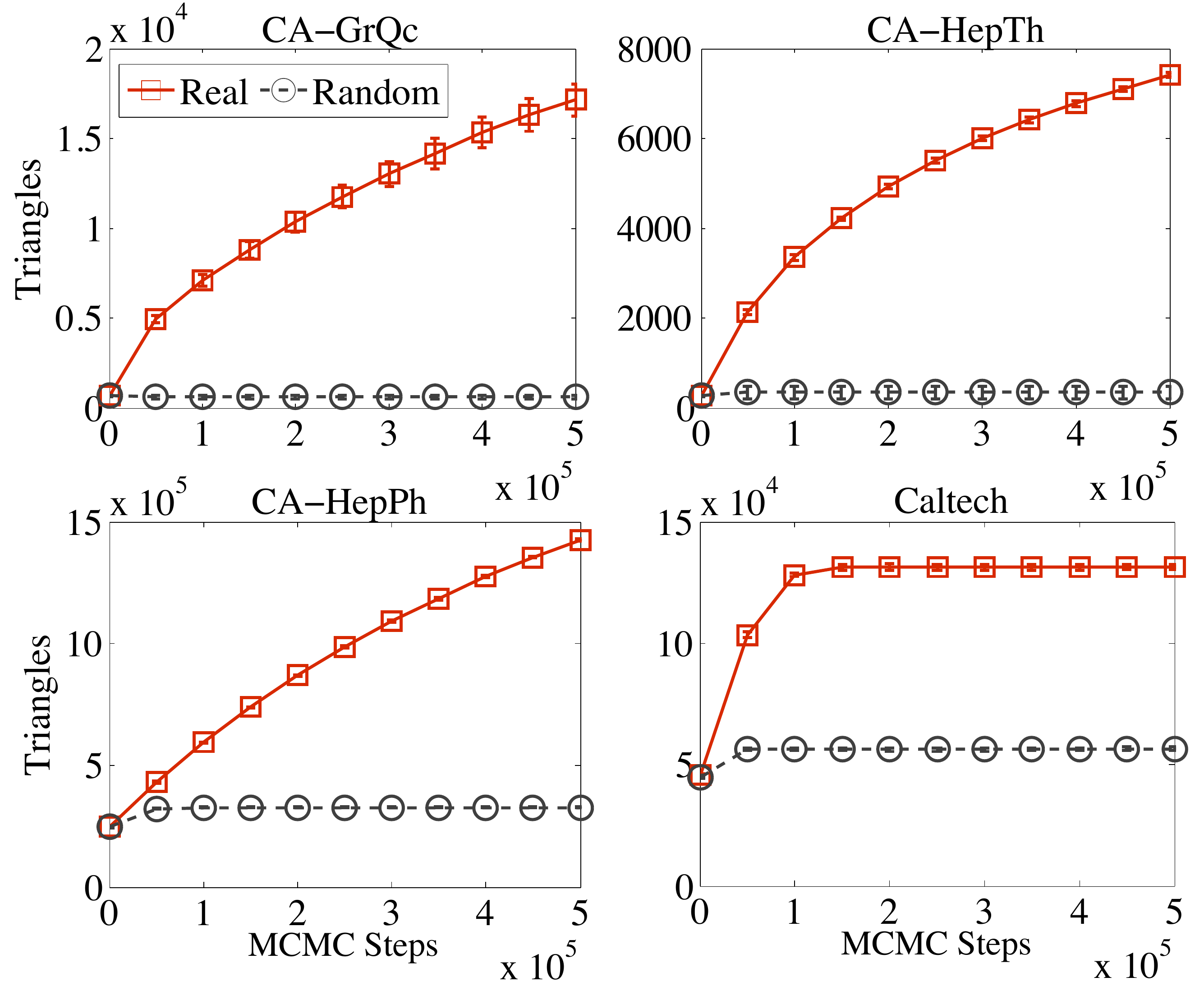}
\vspace{-8mm}
\caption{Fitting the number of triangles with TbI, $5 \times 10^5$ MCMC steps. }\label{fig:triangles}\vspace{-5mm}
\end{center}
\end{figure}

\begin{table}[t]
\begin{center}
\begin{scriptsize}
    \begin{tabular}{|c|c|c|c|c|c|}
    \hline
    ~ & ~                & CA-GrQc & CA-HepPh & CA-HepTh & Caltech   \\\hline

   ~    & Seed &  643 & 248,629 & 222 & 45,170 \\
   $\Delta$             & MCMC  &  35,201 & 2,723,633 & 16,889 & 129,475 \\
~               & Truth &  48,260 & 3,358,499 & 28,339 & 119,563 \\\hline     
    \end{tabular}
    \end{scriptsize}
     \vspace{-2mm}
  \caption { $\Delta$ before MCMC, after $5 \times 10^6$ MCMC steps using TbI, and in the original graph.}\label{tab:relErrorTris}
  \vspace{-6mm}
    \end{center}
\end{table}

\myparab{Experiments.} In Figure~\ref{fig:triangles}, we plot the number of triangles $\Delta$ versus the number of iterations of MCMC ($5 \times 10^5$ steps) for synthetic graphs generated from our workflow, on both actual and random graphs. We generated the seed graphs as in Section~\ref{sec:exp:tribydeg} with privacy cost $3\epsilon=0.3$. The TbI query has privacy cost $4\epsilon=0.4$, and so the total privacy cost is $7\epsilon=0.7$.  We see a clear distinction between real graphs and random graphs; MCMC introduces triangles for the real graphs as appropriate, and does not for the random graphs.
Table~\ref{tab:relErrorTris} reports the initial, final, and actual number of triangles for an order of magnitude more MCMC steps.

There is still room to improve the accuracy of triangle measurement using these techniques, but our results do show that even very simple measurements like TbI, which provide little direct information about the number of triangles, can combine with degree distribution information to give non-trivial insight into quantities that are not easily measured directly. wPINQ allows us to experiment with new queries, automatically incorporating the implications of new measurements without requiring new privacy analyses for each new query.

\myparab{Different values of $\epsilon$.} We repeated the previous experiment with different values of $\epsilon\in \{0.01,0.1,1,10\}$ (for total privacy cost $7\epsilon$).  For brevity, we present results for the CA-GrQc graph and the corresponding random graph Random(GrQc) in Figure~\ref{fig:multi-eps}.
We see that the choice of $\epsilon$ does not significantly impact the behavior of MCMC, maintaining roughly the same expected value but with increases in variance for larger values of $\epsilon$ (\ie noisier queries).
MCMC remains well-behaved because the ``signal'' in the TbI query over the GrQC graph (\ie the value of equation~(\ref{eq:tbiWeight})) is large enough to be distinguished from both random noise and the signal in a random graph.



\begin{figure}[b]
\begin{center}
 \vspace{-3mm}
\includegraphics[height=1.5in, width=0.4\textwidth]{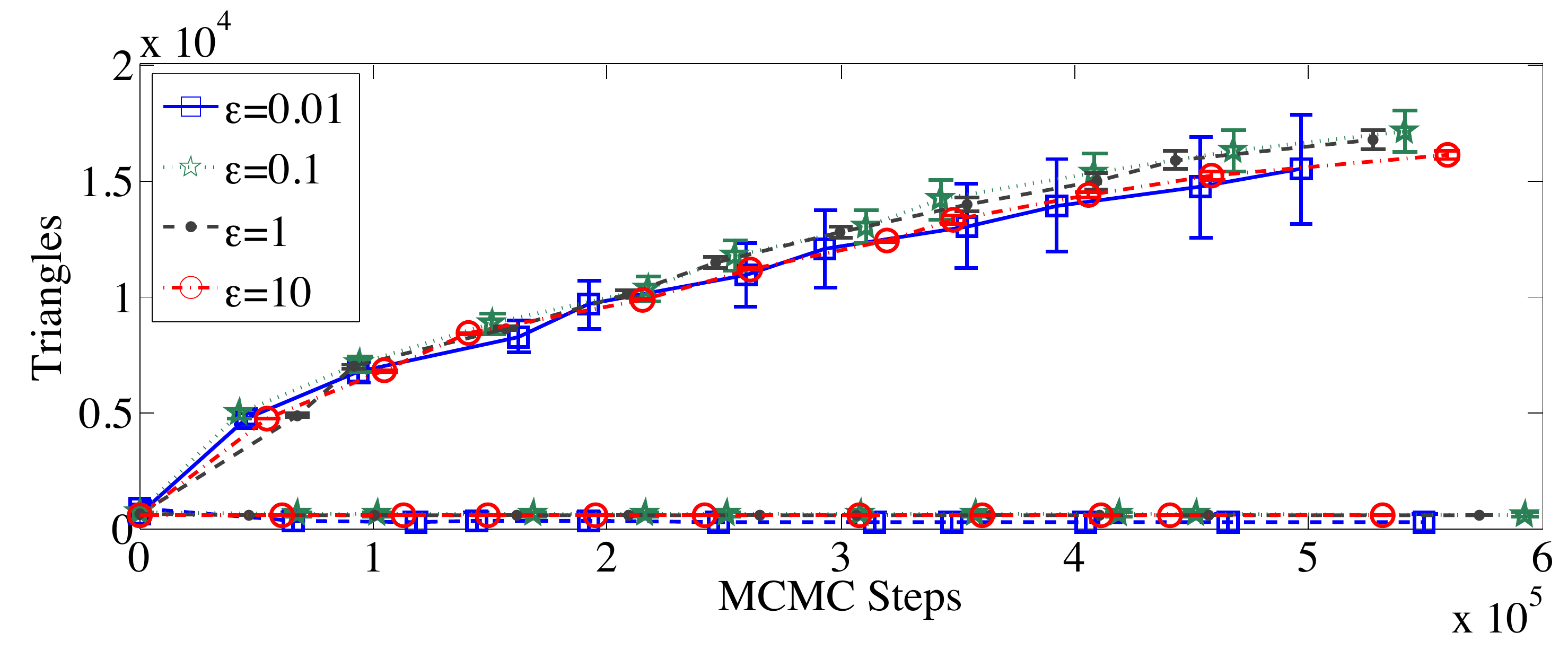}
\vspace{-5mm}
\caption{Testing TbI with different values of $\epsilon$; error bars show standard deviation based on 5 repeated experiments. Total privacy cost for each query is $7\epsilon$. The position of the measurements on the x-axis have been randomized for better visualization.}
\label{fig:multi-eps}
\vspace{-5mm}
\end{center}
\end{figure}

\myparab{Scalability analysis.} We now experiment with TbI using larger datasets, to provide insight into the running time and memory requirements of our approach. The memory requirements should grow with $O(\sum_{v \in G} d_v^2)$, as these are the number of candidate triangles (Section~\ref{sec:imp}). The running time should increase with the skew in the degree distribution, as each edge is incident on more candidate triangles.

%

To verify the scaling properties of TbI, we use five synthetic graphs drawn from the Barab\`{a}si-Albert distribution.  Barab\`{a}si-Albert graphs follow a power law degree distribution (similar to some social networks) and the generation process uses a preferential attachment model. We fix the number of nodes at $100,000$ and edges at $2M$ and change the degree of highest-degree nodes by increasing ``dynamical exponent'' of the preferential attachment~\cite{barabasi2013network} as $\beta \in \{0.5,0.55,0.6,0.65,0.7\}$. \ifnum\full=1 The resulting graphs are described in Table~\ref{tab:barabasi}.\fi
Experimental results are in Figure~\ref{fig:ds-tbi}. As $\sum_{v \in G} d_v^2$ increases from about 71M (for the graph with $\beta=0.5$) to 119M (for $\beta=0.7$), the memory required for MCMC increases.  Meanwhile computation rate, \ie MCMC steps/second,  decreases. Our platform can perform approximately 80 MCMC steps/second on the ``easiest'' ($\beta=0.5$) graph using about 25Gb of RAM. For the most difficult graph ($\beta=0.7$), it can perform about 25 MCMC steps/second using over 45Gb of RAM.

We conclude the scalability analysis with a performance evaluation of TbI on Epinions~\cite{richardson2003trust}, see Table~\ref{tab:graphStats}. While Epinions has about half the number of edges as the most difficult Barab\`{a}si-Albert graphs we tried ($\beta=0.7$), the quantity $\sum_{v \in G} d_v^2=224$M for Epinions is almost double that of the Barab\`{a}si-Albert graph, making Epinions the most difficult graph we tried on our platform. To work with Epinions, we needed over 50GB of memory, and computation ran no faster than 10 MCMC steps/second.  As usual, we experimented with this graph and a random graph with the same distribution but less triangles (statistics in Table~\ref{tab:graphStats}). We run the MCMC for $100,000$ steps and compute the number of triangles every $10,000$ steps (Figure~\ref{fig:ds-tbi}).

\ifnum\full=1
\begin{table}[t]
{\centering
\begin{scriptsize}
    \begin{tabular}{|l||c|c|c|c|c|}
    \hline
    \textbf{Graph} & \textbf{Nodes}   & \textbf{Edges}     & \textbf{$d_{\max}$} & \textbf{$\Delta$} & $\sum_{v \in G} d_v^2$ \\ \hline
    Barab\`{a}si 1 & 100,000 & 2,000,000 & 377        & 16,091 & 71,859,718   \\
    Barab\`{a}si 2 & 100,000 & 2,000,000 & 475        & 18,515 & 77,819,452  \\
    Barab\`{a}si 3 & 100,000 & 2,000,000 & 573        & 22,209 & 86,576,336  \\
    Barab\`{a}si 4 & 100,000 & 2,000,000 & 751        & 28,241 & 99,641,108   \\
    Barab\`{a}si 5 & 100,000 & 2,000,000 & 965        & 35,741 & 119,340,328  \\ \hline
    \end{tabular}
    \end{scriptsize}
}\vspace{-3mm}
     \caption{Statistics of the Barab\`{a}si-Albert graphs.}
       \label{tab:barabasi}\vspace{-4mm}
\end{table}
\fi

\begin{figure}[t]
\begin{center}
\includegraphics[width=0.5\textwidth]{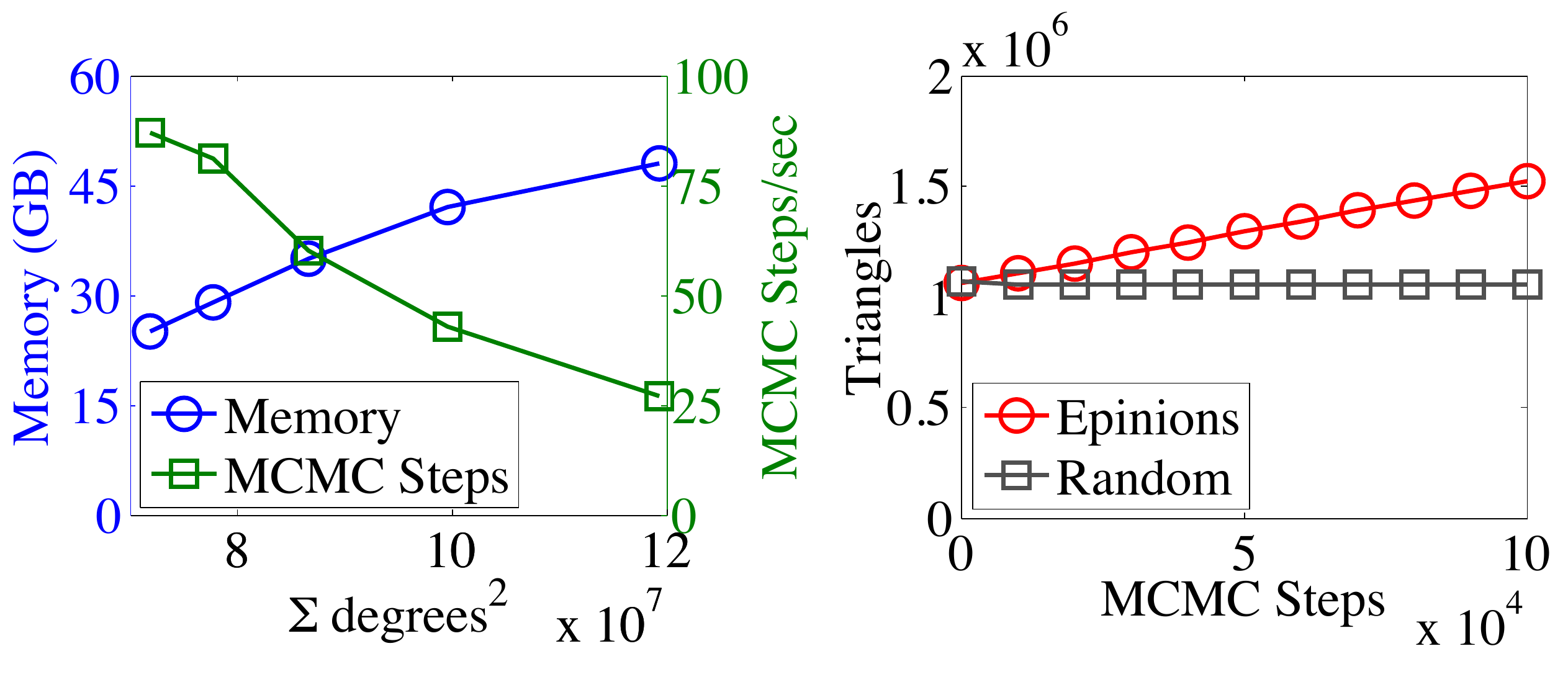}
\vspace{-20pt}
\caption{Running time and MCMC step/second for TbI computed on Barab\`{a}si-Albert graphs with 100K nodes, 2M edges, and dynamical exponent of $\beta\in\{0.5,0.55,0.6,0.65,0.7\}$ (left). TbI behavior on the epinions graph (right).
}
\vspace{-20pt}
\label{fig:ds-tbi}
\end{center}
\end{figure}

\section{Related work}\label{sec:related}

\myparab{Bypassing worse case sensitivity. }
Since the introduction of differential privacy~\cite{DMNS06}, there have been several approaches bypassing worst-case sensitivity~\cite{smooth,chopping,nodeprivacy,hayJoins}.  In the smooth sensitivity framework~\cite{smooth}, one adds noise determined from the sensitivity of the {specific input dataset} 
and those datasets near it. While these approaches can provide very accurate measurements, they typically require custom analyses and can still require large noise to be added even if only a few records in the dataset lead to worst-case sensitivity.  Weighted datasets allow us to down-weight the {specific} records in the dataset that lead to high sensitivity, leaving the benign component relatively untouched.

New approaches \cite{chopping,nodeprivacy} have surmounted sensitivity challenges by discarding portions of the input dataset that cause the sensitivity to be too high; for example, node-level differential privacy can be achieved by trimming the graph to one with a certain maximum degree and then using worst-case sensitivity bounds.  Our approach can be seen as a smoother version of this; we scale down the weights of portions of the input dataset, instead of discarding them outright.


An alternate approach relaxes the privacy guarantee for the portions of the input dataset that cause sensitivity to be high. Investigating joins in social graphs, Rastogi \etal~\cite{hayJoins} consider a relaxation of differential privacy in which more information release is permitted for higher degree vertices. Our approach can be seen as making the opposite compromise, sacrificing accuracy guarantees for such records rather than privacy guarantees.

In \cite{matrixAlg1,matrixAlg2} the authors use weighted sums with non-uniform weights to optimize collections of linear (summation) queries. While this is a popular class of queries, their techniques do not seem to apply to more general problems. With weighted datasets we can design more general transformations (\eg \texttt{Join}, \texttt{GroupBy}) that are crucial for graph analysis but not supported by \cite{matrixAlg1,matrixAlg2}.


\myparab{Languages. } Languages for differentially private computation started with PINQ~\cite{PINQ}, and have continued through Airavat~\cite{airavat}, Fuzz~\cite{fire,DFuzz}, and GUPT~\cite{gupt}. To the best of our knowledge, none of these systems support data-dependent rescaling of record weights.  Although Fuzz does draw a uniform rescaling operator (the ``\texttt{!}" operator) from work of Reed and Pierce~\cite{reed2010distance}, the programmer is required to specify a \emph{uniform} scaling constant for \emph{all} records (and is essentially equivalent to scaling up noise magnitudes).


\myparab{Privacy and graphs. }
Bespoke analyses for graph queries that provide edge-level differential privacy have recently emerged, including degree distributions~\cite{hay}, joint degree distribution (and assortativity)~\cite{sala}, triangle counting~\cite{smooth}, generalizations of triangles~\cite{KRSYgraphs}, and clustering coefficient~\cite{cluster}. New results have emerged for node-level differential privacy as well~\cite{nodeprivacy,chopping,recursive}. Of course, any wPINQ analysis can be derived from first principles; our contribution over these approaches is not in enlarging the space of differentially private computation, but rather in automating proofs of privacy and the extraction of information.

\cite{privacyTut} covers other graph analyses that satisfy privacy definitions that may not exhibit the robustness of DP.

\myparab{Bibliographic note.}  As we mentioned throughout, our earlier workshop paper~\cite{wosn} sketched our workflow and presented preliminary results showing how it could be used to synthesize graphs that respect degree and joint-degree distributions.  This paper is full treatment of our platform, showing how weighted transformation stability can form the basis of an expressive declarative programming language wPINQ, and presenting our incremental query processing engine for probabilistic inference.  We also present new algorithms and experiments related to counting triangles and squares.

\section{Conclusions}

We have presented our platform for differentially-private computation that consists of a declarative programming language, wPINQ, and an incremental evaluation engine that enables Markov chain Monte Carlo (MCMC) methods to synthesize representative datasets.  wPINQ is based on an approach to differentially-private computation, where data, rather than noise, is calibrated to the sensitivity of query. Specifically, wPINQ works with \emph{weighted datasets} so that the contribution of specific troublesome records that can harm privacy (\eg edges incident on high-degree nodes) are smoothly scaled down.   We have specialized our platform to private analysis of social graphs, and discussed how it can simplify the process, both by automating the proofs of privacy and the extraction of information needed for generating synthetic graphs.  While we have cast a number of analyses as queries in wPINQ and evaluated their performance, the analyses we have shown here are by no means the limit of what is possible with wPINQ. Indeed, we believe wPINQ's key benefit is its flexibility, and we therefore hope our platform will be an enabler for future private analyses of interesting datasets, especially social networks.

\myparab{Acknowledgements. } We thank the WOSN'12 reviewers, the VLDB'14 reviewers, George Kollios, Robert Lychev and Evimaria Terzi for comments on this draft, and Ran Canetti and Leonid Reyzin for helpful discussions.

\newpage
\begin{scriptsize}
\bibliographystyle{abbrv}
\bibliography{DP}
\end{scriptsize}

\appendix

\section{Stability of \lowercase{w}PINQ operators}
\label{apx:ops}

We prove the stability of \texttt{Join} discussed in Section~\ref{sec:join:wPINQ}.
\begin{theorem}
For any datasets $A,A',B,B'$,
\begin{eqnarray*}
\|{\texttt{\em Join}}(A,B) - \texttt{\em Join}(A',B')\| & \le & \| A - A' \|  + \| B - B'\| \; .
\end{eqnarray*}
\end{theorem}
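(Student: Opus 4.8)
The plan is to exploit the key-by-key structure of \texttt{Join}. Since \texttt{Join} partitions both inputs by their key-selection functions and emits an output record $(a,b)$ only for the single key shared by $a$ and $b$, the outputs associated with distinct keys $k$ occupy disjoint regions of the output domain. Hence both sides of the claimed inequality split additively over keys: writing $A_k,B_k$ for the restrictions to key $k$, the left side equals $\sum_k \| A_k\times B_k^T/(\|A_k\|+\|B_k\|) - A'_k\times B_k'^T/(\|A'_k\|+\|B'_k\|)\|$ while $\|A-A'\|+\|B-B'\| = \sum_k(\|A_k-A'_k\|+\|B_k-B'_k\|)$. It therefore suffices to prove the inequality one key at a time, so I would fix a key, drop the subscript $k$, and note that the contributing denominators are positive (the $\beta=0$ case being trivial).

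For a single key I would use a hybrid argument: insert $\texttt{Join}(A',B)$ and apply the triangle inequality, so that $\|\texttt{Join}(A,B)-\texttt{Join}(A',B')\| \le \|\texttt{Join}(A,B)-\texttt{Join}(A',B)\| + \|\texttt{Join}(A',B)-\texttt{Join}(A',B')\|$. By symmetry it is then enough to bound a one-sided change, say $\|\texttt{Join}(A,B)-\texttt{Join}(A',B)\| \le \|A-A'\|$ for fixed $B$ with $\|B\|=\beta$. Because $\texttt{Join}(A,B)(a,b)=A(a)B(b)/(\|A\|+\beta)$, factoring out the sum over $b$ (which contributes $\sum_b|B(b)|=\beta$) collapses this quantity to $\beta\sum_a |A(a)/(\|A\|+\beta) - A'(a)/(\|A'\|+\beta)| = \|\Phi(A)-\Phi(A')\|$, where $\Phi(A) := \beta A/(\|A\|+\beta)$. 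Thus everything reduces to showing that the radial rescaling $\Phi$ is a contraction in $\|\cdot\|$.

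This contraction step is the crux and the main obstacle. The naive split $\Phi(A)-\Phi(A') = \tfrac{\beta}{\|A\|+\beta}(A-A') + \bigl(\tfrac{\beta}{\|A\|+\beta}-\tfrac{\beta}{\|A'\|+\beta}\bigr)A'$ is too lossy: bounding the two pieces separately produces a factor that can exceed $1$ when $\|A\|$ is small, because it discards the cancellation between the radial scaling and the displacement. Instead I would establish the Lipschitz bound via the Jacobian of $\Phi$ on each sign-orthant. Writing $s=\|A\|$, a direct computation gives $\partial\Phi_i/\partial A_j = \beta\delta_{ij}/(s+\beta) - \beta A_i\,\mathrm{sgn}(A_j)/(s+\beta)^2$, whose $j$-th absolute column sum is $\beta(2s+\beta-2|A_j|)/(s+\beta)^2 \le \beta(2s+\beta)/(s+\beta)^2 \le 1$, the last inequality being equivalent to $0\le s^2$. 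Since the $\|\cdot\|$-operator norm of a matrix is its maximum absolute column sum, $\Phi$ is $1$-Lipschitz on each orthant; integrating the Jacobian along the segment from $A'$ to $A$ (which meets the finitely many coordinate hyperplanes only finitely often, so the fundamental theorem of calculus applies to the continuous, piecewise-smooth $\Phi$) yields $\|\Phi(A)-\Phi(A')\| \le \|A-A'\|$. Feeding this bound and its $B$-analogue back through the hybrid inequality and summing over keys completes the proof.
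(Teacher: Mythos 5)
Your proof is correct, and its outer skeleton --- the per-key decomposition plus the hybrid step that reduces everything to a one-sided change in $A$ with $B$ fixed --- is exactly the paper's. Where you genuinely diverge is the crux. The paper proves the per-key inequality $\bigl\|\tfrac{A_k\times B_k^T}{\|A_k\|+\|B_k\|}-\tfrac{A'_k\times B_k^T}{\|A'_k\|+\|B_k\|}\bigr\|\le\|A_k-A'_k\|$ by direct algebra on the outer products: cross-multiplying denominators via $\tfrac{a}{x}-\tfrac{b}{y}=\tfrac{a(y-x)-(b-a)x}{xy}$, applying the triangle inequality, bounding the two numerators, and verifying the resulting fraction is at most one. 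You instead factor the weight $\|B_k\|$ out of the outer product entirely, collapsing the claim to a clean standalone lemma: the radial rescaling $\Phi(A)=\beta A/(\|A\|+\beta)$ is $1$-Lipschitz in $\ell_1$. Your Jacobian verification of that lemma is right --- the column sum $\beta(2s+\beta-2|A_j|)/(s+\beta)^2\le 1$ does reduce to $0\le s^2$ --- and your observation that the naive two-term split loses a factor approaching $2$ correctly explains why some cancellation-aware argument (the paper's cross-multiplication, or your Jacobian) is needed. Your route buys conceptual clarity and a reusable contraction lemma; the paper's route is purely algebraic and dimension-free. Two points you should tighten, neither a real gap: (i) the claim that the segment from $A'$ to $A$ meets ``finitely many coordinate hyperplanes finitely often'' implicitly assumes finitely supported datasets, whereas weighted datasets here may have countably infinite support; fix this by truncating to a finite support set $F$, applying your argument there, and letting $F$ grow, using the easy estimate $\|\Phi(A^F)-\Phi(A)\|\to 0$; (ii) a coordinate that vanishes identically along the segment puts the segment \emph{inside} a hyperplane rather than crossing it --- harmless, since $\Phi$ preserves zero coordinates, so such coordinates can simply be dropped before differentiating.
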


\begin{proof}
We will first argue that
\begin{equation}\label{eq:join1}
\|{\texttt{Join}}(A,B) - \texttt{Join}(A',B)\| \le \|A - A'\| \;.
\end{equation}
An equivalent argument shows  $\|{\texttt{Join}}(A',B) - \texttt{Join}(A',B')\| \le \|B - B'\|$ and concludes the proof.

It suffices to prove (\ref{eq:join1}) for any $A_k, A'_k, B_k$. Writing \texttt{Join} in vector notation as
$$
\texttt{Join}(A,B) = \sum_k \frac{A_k \times B_k^T}{\|A_k\| + \|B_k\|}
$$
we want to show that for each term in this sum,
$$
\left\| \frac{A_k \times B_k^T}{\|A_k\| + \|B_k\|} - \frac{A'_k \times B_k^T}{\|A'_k\| + \|B_k\|}\right\|  \le \|A_k - A'_k\| \; .
$$
The proof is essentially by cross-multiplication of the denominators and tasteful simplification.
For simplicity, let $a$ and $b$ be $A_k \times B_k^T$ and $A'_k \times B_k^T$, respectively, and let $x$ and $y$ be the corresponding denominators.  We apply the equality
$$\frac{a}{x} - \frac{b}{y} = \frac{a(y - x) - (b - a)x}{xy}, $$
followed by the triangle inequality:
$$\|\frac{a}{x} - \frac{b}{y}\| \le \frac{\|a(y - x)\|}{xy} + \frac{\|(b - a)x\|}{xy}. $$
Expanding, these two numerators can be re-written as
\begin{eqnarray*}
\|a(y - x)\|& = & (\|A_k\| - \|A'_k\|) \times (\|A_k\| \|B_k\|) \\
& \le &  \|A_k - A'_k\| \times (\|A_k\| \|B_k\|)\\
\|(b-a)x\| & = & \|(A_k - A'_k)B_k\| \times (\|A_k\| + \|B_k\|) \\
&\le &  \|A_k - A'_k\| \times \|B_k\| (\|A_k\| + \|B_k\|)
\end{eqnarray*}
Assuming, without loss of generality, that $\|A_k\| \ge \|A'_k\|$, their sum is at most
\begin{eqnarray*}
&& \|A_k - A'_k\| \times (\|A_k\| + \|A'_k\| + \|B_k\|) \times \|B_k\| \\
& = & \|A_k - A'_k\| \times (xy - \|A_k\|\|A'_k\|) \; .
\end{eqnarray*}
Division by $xy$ results in at most $\|A_k - A'_k\|$.
For our choice of $a,b,x,y$ the term $\|b-a\|x$ has a factor of $|x-y|$ in it, which we extract from both terms.
If we re-introduce the definitions of $a,b,x,y$ and apply a substantial amount of simplification, this bound becomes
$$
\|A_k - A'_k\| \times \frac{(\|A_k\| + \|A'_k\| + \|B_k\|) \times \|B_k\|}{ (\|A_k\| + \|B_k\|) \times (\|A'_k\| + \|B_k\|)}\; .
$$
The numerator is exactly $2\|A_k\|\|A'_k\|$ less than the denominator, making the fraction at most one.
\end{proof}

\ifnum\full=1

\noindent
Next, we prove stability for \texttt{GroupBy} (Section~\ref{sec:groupby}).

\begin{theorem}
For any $A,A'$ and key function $f$,
$$ \|\texttt{\em GroupBy}(A,f) - \texttt{\em GroupBy}(A',f) \| \le \|A - A'\| $$
\end{theorem}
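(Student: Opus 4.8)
The plan is to first reduce to a single key and then represent \texttt{GroupBy} on each part as a ``layer cake'' over thresholds, which dissolves the combinatorial difficulty of sorting. Since the key selector partitions both $A$ and $A'$ by the same rule, write $A=\sum_k A_k$ and $A'=\sum_k A'_k$ over the common partition. Because every output record of \texttt{GroupBy}$(A_k)$ is tagged with its key $k$ (e.g.\ ``odd, $\{5,3,1\}$''), the outputs for distinct keys live in disjoint parts of the output domain, so $\|\texttt{GroupBy}(A,f)-\texttt{GroupBy}(A',f)\| = \sum_k \|\texttt{GroupBy}(A_k)-\texttt{GroupBy}(A'_k)\|$ while $\|A-A'\|=\sum_k\|A_k-A'_k\|$. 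Hence it suffices to prove the per-part bound $\|\texttt{GroupBy}(A)-\texttt{GroupBy}(A')\|\le\|A-A'\|$ for a single part (I drop the subscript $k$), which I do for non-negative weights, the case arising in our graph queries.

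The crux is the observation that each set $\{x_0,\dots,x_i\}$ receiving positive weight is exactly a superlevel set $L_t(A)=\{x: A(x)\ge t\}$, so that, writing $\delta_S$ for a unit mass on the output record $S$,
\[
\texttt{GroupBy}(A) \;=\; \tfrac12\int_0^\infty \delta_{L_t(A)}\,\mathbf{1}[L_t(A)\ne\emptyset]\,dt \;.
\]
Indeed $L_t(A)=\{x_0,\dots,x_i\}$ for $t\in(A(x_{i+1}),A(x_i)]$, an interval of length $A(x_i)-A(x_{i+1})$, and multiplying by $\tfrac12$ recovers exactly the prescribed weight $(A(x_i)-A(x_{i+1}))/2$; empty superlevel sets (thresholds above the maximum weight) are excluded, matching the fact that \texttt{GroupBy} never emits the empty set. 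This representation is the step I expect to matter most, and it is where I would spend care verifying the correspondence, including ties (which produce zero-length intervals and hence zero weight, consistently).

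From here the estimate is routine. Applying the triangle inequality for integrals gives
\[
\|\texttt{GroupBy}(A)-\texttt{GroupBy}(A')\| \;\le\; \tfrac12\int_0^\infty \big\|\delta_{L_t(A)}\mathbf{1}[\cdot] - \delta_{L_t(A')}\mathbf{1}[\cdot]\big\|\,dt \;.
\]
Each integrand is the distance between two unit masses (or a unit mass and zero), hence at most $2\,\mathbf{1}[L_t(A)\ne L_t(A')]$; the factor $2$ cancels the $\tfrac12$. A union bound over records gives $\mathbf{1}[L_t(A)\ne L_t(A')]\le\sum_x \mathbf{1}[\mathbf{1}[A(x)\ge t]\ne\mathbf{1}[A'(x)\ge t]]$, and integrating each term yields $\int_0^\infty \mathbf{1}[\mathbf{1}[A(x)\ge t]\ne\mathbf{1}[A'(x)\ge t]]\,dt = |A(x)-A'(x)|$. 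Summing recovers $\|A-A'\|$, completing the per-part bound.

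The genuine obstacle that this approach circumvents is the reordering of records: a small change in weights can permute the sorted order and thereby change the \emph{identity} of the emitted prefix sets in a way that is awkward to track through a direct single-coordinate-perturbation argument (one must also fix a tie-breaking convention). Superlevel sets depend only on thresholds, not on any ordering, so the layer-cake form makes the change between $A$ and $A'$ local in $t$ and trivially summable. It also exposes why the factor $\tfrac12$ in \texttt{GroupBy} is exactly right: a change of superlevel set at a threshold simultaneously introduces and removes an output record (an $\ell_1$ change of $2$), and halving converts this into the tight stability constant $1$.
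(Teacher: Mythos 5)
Your proof is correct, and it takes a genuinely different route from the paper's. The paper argues by a path (``walking'') argument: it decomposes the difference $A_k - A'_k$ into a sequence of single-record perturbations by amounts $\delta$ small enough not to change the sorted order, observes that each such step moves $\delta/2$ of weight off one prefix set and $\delta/2$ onto the adjacent one (an output change of exactly $|\delta|$), and sums the $\delta$'s along the walk, pausing at ties so that records can change position, to accumulate $\|A - A'\|$. Your layer-cake representation
\[
\texttt{GroupBy}(A_k) \;=\; \tfrac12\int_0^\infty \delta_{L_t(A_k)}\,\mathbf{1}[L_t(A_k)\ne\emptyset]\,dt
\]
replaces that combinatorial bookkeeping with the triangle inequality for integrals, a union bound over records, and the identity $\int_0^\infty \mathbf{1}\bigl[\mathbf{1}[A(x)\ge t]\ne\mathbf{1}[A'(x)\ge t]\bigr]\,dt = |A(x)-A'(x)|$. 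The trade-offs are as follows. The paper's walk must handle the moments when records swap positions in the sorted order --- at a tie, both orderings give the same output because the distinguishing prefix set has weight zero --- a point its proof leaves implicit; your superlevel sets are canonical, so ties and reorderings never arise, and your argument also makes transparent why the factor $1/2$ is exactly the right constant (a change of superlevel set costs $2$ in $\ell_1$, and no more than one set can change per threshold per record). The only scope difference is your restriction to non-negative weights, but this is not a real loss: the paper's definition of \texttt{GroupBy} (whose final emitted set carries half the smallest record's weight, implicitly comparing against zero) already presumes non-negative weights, so your argument covers the operator as the paper defines and uses it.
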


\begin{proof}
We decompose each difference $A_k-A_k'$ into a sequence of differences, each to a single record and by an amount $\delta$ that does not change the order of records within the group. That is, each $\delta$ satisfies
For each $x_i$, a change in weight by $\delta$ which does not change the ordering, that is
$$ A_k(x_{i+1}) \le A_k(x_i)+ \delta  \le A_k(x_{i-1})$$
Each of these steps by $\delta$ causes the weight of $\{ x_j : j < i \}$ to increase (or decrease) by $\delta/2$ and the weight of $\{ x_j : j \le i \}$ to decrease (or increase) by $\delta/2$, for a total change of $\delta$.
Each record $x$ can be ``walked" to its new position through a sequence of such differences, equal in weight to some other record at each intermediate step, where the accumulation of $\delta$s is at most $|A_k(x) - A_k'(x)|$, for a total of $\|A - A'\|$.
\end{proof}

\section{\lowercase{w}PINQ operator implementation}\label{sec:opsImplement}

We give the flavor of a few operator implementations, pointing out differences from traditional incremental dataflow.

\myparab{SelectMany.  } The \texttt{SelectMany} operator is linear, and each change in input weight simply results in the corresponding change in output weight, for records determined by the supplied result selector.

\myparab{Union, Intersect, Concat, and Except. }
Both \texttt{Union} and \texttt{Intersect}  maintain dictionaries from records to weights, for each of their inputs. As a weight change arrives on either input, the corresponding two weights are retrieved and consulted to determine if the maximum or minimum weight has changed, and the appropriate differences are emitted and stored weights updated. \texttt{Concat} and \texttt{Except} pass through differences from either input, with weight negated in the case of \texttt{Except}'s second input.

\Snote{To Frank. In light of the modifications to Section~\ref{sec:imp}, you probably want to shorten/modify the following paragraph.  Also, the paragraph is fairly hard to parse, I think because I can't tell when you are describing the "normal" case of where $\|A_k\| + \|B_k\|$ does change, to the special "optimized" case where it does not.  It's also not clear to me if there is any new information here relative to what is in the end of Section~\ref{sec:imp}.  It think the only new thing here is the mention of the dictionary; if so, maybe we can cut this paragraph altogether, unless there is something particularly exciting about using a dictionary? or you can just mention the dictionary in Section~\ref{sec:imp}.}

\myparab{Join.  } For each input, \texttt{Join} operators maintain a dictionary from keys to a list of the records (and their weights) mapping to that key. When changes arrive on either input, the \texttt{Join} is tasked with emiting the differences in cross-product weights, as in \eqref{eq:joinweight}, which it can do explicitly by constructing both the old and new weighted output collection. In the case that the term $\|A_k\| + \|B_k\|$ has not changed (not uncommon for our random walks over graphs, where edges switch destinations rather than disappear) substantial optimization can be done. For input differences $a_k$ and $b_k$ the output difference equals
\begin{eqnarray*}
&&\texttt{Join}(A_k + a_k, B_k + b_k) -  \texttt{Join}(A_k, B_k) \\
& = & \frac{a_k \times B_k^T}{\|A_k\| + \|B_k\|} + \frac{A_k \times b_k^T}{\|A_k\| + \|B_k\|} + \frac{a_k \times b_k^T}{\|A_k\| + \|B_k\|} \; .
\end{eqnarray*}
We can avoid the potentially large term $A_k \times B_k^T$, whose denominator has remained the same. In traditional incremental dataflow \texttt{Join} does not scale its inputs, and this optimization would always be in effect.

\Snote{The use of the word ``retire'' below is confusing. Also, when you say ``\texttt{Shave} maintains a dictionary from records to their weight'', do you mean record+index? How to indices figure into how this operator is implemented?}

\myparab{Shave.  } \texttt{Shave} transforms each input record into a collection of unit weight output records paired with an index distinguishing each of the output records. To efficiently retire input differences, \texttt{Shave} maintains a dictionary from records to their weight, from which it can determine which output differences must be produced.

\Snote{This is not very clear.  Why does the ``ordering of records'' matter? Why do ``prefixes'' matter? }

\myparab{GroupBy.  } \texttt{GroupBy} groups records by key, and independently applies a reducer to each prefix of these records when ordered by weight. When presented with input difference, the ordering of records may change, and \texttt{GroupBy} operator must determine which prefixes must be re-evaluated. Consequently, \texttt{GroupBy} maintains a dictionary from keys to sorted lists of records (ordered descending by weight) mapping to that key. Input differences are retired by comparing the old and new sorted lists, and emitting the resulting records in difference.
\fi 

\vfill
\onecolumn
\section{Reproduction of Sala~\etal's result: Non-uniform noise for the JDD}\label{apx:jddProof}


\begin{claim} Let $D$ be the domain of possible node degrees, \ie $D=\{0,1,...,d_{\max}\}$. The following mechanism is $\epsilon$-differentially private: for each pair $(d_i,d_j) \in D \times D$, release a count of number of edges incident on nodes of degree $d_i$ and degree $d_j$, perturbed by zero-mean Laplace noise with parameter $4\max\{d_i,d_j\}/\epsilon$.
\end{claim}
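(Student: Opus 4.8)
The plan is to treat the mechanism as a histogram release under the Laplace mechanism with a \emph{per-bin} noise scale $\lambda_{ij} = 4\max\{i,j\}/\epsilon$, and to invoke the standard fact that releasing independent $\textrm{Laplace}(\lambda_b)$ noise in each bin $b$ incurs privacy loss at most $\sum_b |f_b(G) - f_b(G')|/\lambda_b$ for any two inputs $G,G'$ (immediate from the product form of the Laplace density and the triangle inequality, the weighted analogue of the \texttt{NoisyCount} argument behind \theoremref{thm:compose}). Hence it suffices to show that for every pair of graphs $G,G'$ differing in a single edge,
\[
\sum_{i \le j} \frac{|f_{ij}(G) - f_{ij}(G')|}{4\max\{i,j\}} \;\le\; 1,
\]
where $f_{ij}$ counts edges between a degree-$i$ and a degree-$j$ node; multiplying through by $\epsilon$ then yields $\epsilon$-differential privacy.

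Next I would fix $G' = G + (u,v)$, write $d_u,d_v$ for the degrees of $u,v$ in $G$, and catalogue exactly which bins move. The subtle point — the reason the JDD is not low-sensitivity in the naive sense — is that inserting $(u,v)$ raises the degrees of $u$ and $v$, which \emph{relabels} every edge incident on them. Concretely: (i) the new edge lands in bin $\{d_u{+}1,d_v{+}1\}$; (ii) each of the $d_u$ pre-existing edges at $u$ moves from $\{d_u,d_w\}$ to $\{d_u{+}1,d_w\}$; (iii) each of the $d_v$ pre-existing edges at $v$ moves from $\{d_v,d_x\}$ to $\{d_v{+}1,d_x\}$; no other bin changes. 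Since no edge is incident on a degree-$0$ vertex, every affected bin has $\max\{i,j\}\ge 1$ and all denominators are well defined.

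The estimate itself I would obtain by the triangle inequality, bounding $|f_b(G)-f_b(G')|$ by the number of unit ``departure'' and ``arrival'' events touching bin $b$, so the weighted sum is at most $\sum 1/(4\max(\text{bin}))$ over all such events. The $d_u$ departures from $u$ lie in bins with $\max\ge d_u$, contributing at most $d_u/(4d_u)=\tfrac14$; the $d_v$ departures from $v$ likewise contribute at most $\tfrac14$. The arrivals lie in bins with $\max\ge d_u{+}1$ (resp.\ $d_v{+}1$). The one genuinely delicate step, needed to land on the constant $4$ rather than $5$, is accounting for the new edge by charging it to the higher-degree endpoint: assuming without loss of generality $d_u\ge d_v$, its bin $\{d_u{+}1,d_v{+}1\}$ has $\max=d_u{+}1$, so I would group it with the $d_u$ arrivals at $u$ (all in bins of weight $\ge d_u{+}1$), giving $d_u{+}1$ events of total cost at most $(d_u{+}1)/(4(d_u{+}1))=\tfrac14$; the $d_v$ arrivals at $v$ cost at most $\tfrac14$. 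The four groups sum to at most $1$, and the boundary cases $d_u=0$ or $d_v=0$ only delete terms.

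I expect the main obstacle to be precisely this bookkeeping: one must see that a single edge insertion perturbs $\Theta(d_u+d_v)$ bins rather than one, and then exploit the $\max$ in the denominator — which caps each incident edge's contribution at $\approx 1/d_u$, so that $d_u$ of them total $O(1)$ — together with the charging of the new edge to the larger-degree side. A secondary but essential point, and the crux of the correction to Sala~\etal's original argument, is that the per-bin privacy-loss bound above requires noise on \emph{every} pair $(d_i,d_j)$, including those with no incident edge: releasing an exact zero on an empty bin carries infinite privacy loss precisely when an insertion newly populates that bin, which is exactly the flaw noted in \sectionref{sec:jdd}.
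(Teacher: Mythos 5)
Your proposal is correct and follows essentially the same route as the paper's own proof: reduce to the per-bin Laplace privacy-loss sum $\sum_{i,j}|t_2(i,j)-t_1(i,j)|/n(i,j)\le 1$, catalogue the $1+2d_u+2d_v$ affected bins (the new edge plus the relabelled edges at each endpoint), and exploit the $\max$ in the denominator with a WLOG on the larger degree. The only differences are cosmetic — you measure degrees in the graph \emph{without} the edge where the paper uses degrees in the graph containing it, and your charging of the new edge to the higher-degree arrival group is just a cleaner regrouping of the same five terms the paper sums to $1-\tfrac{1}{4d_a}$.
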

\begin{proof}
We need to show that for two graphs $G_1$ and $G_2$, differing in a single edge $(a,b)$, that the mechanism releasing all pairs $(d_i,d_j)$ with Laplace noise of with parameter $4\max\{d_i,d_j\}/\epsilon$ satisfies $\epsilon$-differential privacy.

\myparab{Notation.}  This algorithm works on undirected edges, so that the pair $(d_i,d_j)=(d_j,d_i)$.
Let $n(i,j)=4\max\{d_i,d_j\}$.
Let $t_1(i,j)$ be the true value count of pair $(d_i,d_j)$ in graph $G_1$, and respectively $t_2(i,j)$ for $G_2$.

\smallskip
If we add Laplace noise $n(i,j)$ to each true count $t_1(i,j), t_2(i,j)$ the probability that we get a result where the $(d_i,d_j)$-th pair has values $r(i,j)$ is proportional to
\begin{align*}
\Pr[M(G_1) = r] &\propto \prod_{i,j} \exp\left(-  |r(i,j) - t_1(i,j)| \times \epsilon/ n(i,j)\right)
\end{align*}
Fixing the output $r$, we are interested in the ratio of this probability for $G_1$ and $G_2$. Fortunately, the constant of proportionality is the same for the two (because the Laplace distribution has the same integral no matter where it is centered), so the ratio of the two is just the ratio of the right hand side above:
\begin{align*}
\frac{\Pr[M(G_1) = r]}{\Pr[M(G2) = r]}
& =  \prod_{i,j} \exp((|r(i,j) - t_2(i,j)| - |r(i,j) - t_1(i,j)|) \times \epsilon/n(i,j))\\
& \leq \prod_{i,j} \exp(|t_2(i,j) - t_1(i,j)| \times \epsilon/n(i,j)) \qquad\text{(triangle inequality)}\\
& = \exp(\epsilon \times \sum_{i,j} \left|t_2(i,j) - t_1(i,j)\right| / n(i,j)
) 
\end{align*}
Thus, it suffices to show that
\begin{equation}\label{eq:jddshow}
\sum_{i,j} \left|t_2(i,j) - t_1(i,j)\right| / n(i,j) \leq 1
\end{equation}
Notice that we need only worry about cases where $t_2(i,j) \neq t_1(i,j)$.  Suppose that $G_1$ contains the edge $(a,b)$, while $G_2$ does not, and wlog assume that $d_a\geq d_b$, where $d_a$ and $d_b$ are the degrees of node $a$ and $b$ in graph $G_1$.  The differences between $t_1$ and $t_2$ are as follows:
\begin{itemize}
\item The count of pair $t_1(d_a,d_b)$ is one higher than $t_2(d_a,d_b)$.
\item Ignoring the $(a,b)$ edge, which we already accounted for above, node $a$ has $d_a-1$ other edges incident on it; it follows that there are at most $d_a-1$ pairs $(d_a,*)$ that are one greater in $G_1$ relative to $G_2$.  Similarly, there are at most $d_b-1$ pairs $(d_b,*)$ that are one greater in $G_1$ relative to $G_2$.
\item Furthermore, since the degree of node $a$ in $G_2$ is $d_a-1$, it follows that there are at most $d_a-1$ pairs $(d_a-1,*)$ that are one greater in $G_2$ relative to $G_1$. Similarly, there are at most $d_b-1$ pairs $(d_b-1,*)$ that are one greater in $G_2$ relative to $G_1$.
\end{itemize}
Notice that the total number of differences, in terms of the degree of $a,b$ in $G_1$ (before the edge is removed) is $2(d_a-1)+2(d_a-1)+1=2d_a+2d_b-3$. Note that if we express this in terms of the degree of $a,b$ in $G_2$, we replace $d_a$ with $d_a+1$ and similarly for $d_b$, so
we get a total of $2d_a+2d_b+1$ as in \cite{sala}.
It follows that the left side of (\ref{eq:jddshow}) becomes:
\begin{align*}
\sum_{(i,j)} \frac{\left|t_2(i,j) - t_1(i,j)\right|}{n(i,j)}
& \le \frac1{n(d_a,d_b)}+
\frac{d_a-1}{n(*,d_a)}+
\frac{d_a-1}{n(*,d_a-1)}+
\frac{d_b-1}{n(*,d_b)}+
\frac{d_b-1}{n(*,d_b-1)}\\
& \leq \frac1{4\max\{d_a,d_b\}}+
\frac{d_a-1}{4d_a}+
\frac{d_a-1}{4d_a-4}+
\frac{d_b-1}{4d_b}+
\frac{d_b-1}{4d_b-4}  \quad\text{(since }n(*,d_a)\geq 4d_a)\\
& \leq \frac1{4d_a}+
2\frac{d_a-1}{4d_a}+
2\frac{d_a-1}{4d_a-4}
  \quad\text{(since }d_a\ge d_b)\\
& = 1 -\tfrac{1}{4d_a}\\
&\leq 1
\end{align*}
which is what we set out to show in (\ref{eq:jddshow}).
%
\end{proof}


\end{document}